\title{A polynomial time parallel algorithm for graph isomorphism using a quasipolynomial number of processors} %TODO Please add
\titlerunning{Parallel polynomial graph isomorphism}%optional, please use if title is longer than one line
\author{Duc Hung Pham}{Rice University, United States of America}{Hung.D.Pham@rice.edu}{https://orcid.org/0000-0002-1825-0097}{}
\author{Krishna V. Palem}{Rice University, United States of America}{Krishna.V.Palem@rice.edu}{[orcid]}{}
\author{M. V. Panduranga Rao}{Indian Institute of Technology Hyderabad, India}{mvp@iith.ac.in}{[orcid]}{}
\authorrunning{D\,Q. Public and J.\,R. Public}%TODO mandatory. First: Use abbreviated first/middle names. Second (only in severe cases): Use first author plus 'et al.'
\keywords{graph isomorphism, algorithm, parallel, quasipolynomial, WL refinement}%TODO mandatory; please add comma-separated list of keywords
\begin{document}

\maketitle

%TODO mandatory: add short abstract of the document
\begin{abstract}
Graph Isomorphism (GI) problem is a theoretically interesting problem because it has not been proven to be in P nor to be NP complete. It is known that the GI problem is in the low hierarchy of NP, and does not equal NP unless the polynomial-time hierarchy collapses to some finite level. For thirty years, the best algorithm for GI had sub-exponential running time, until Babai made a breakthrough in 2015 when announcing a quasipolynomial time algorithm for GI problem. Babai work gives the most theoretically efficient algorithm for GI, as well as a strong evidence favoring the idea that class GI $\ne$ NP and thus P $\ne$ NP. Based on Babai's algorithm, we prove that GI can further be solved by a parallel algorithm that runs in polynomial time using a quasipolynomial number of processors. We achieve that result by identifying the bottlenecks in Babai's algorithms and parallelizing them. In particular, we prove that color refinement can be computed in parallel logarithmic time using a polynomial number of processors, and the $k$-dimensional WL refinement can be computed in parallel polynomial time using a quasipolynomial number of processors. Our work suggests that Graph Isomorphism and GI-complete problems can be computed efficiently in a parallel computer, and provides insights on speeding up parallel GI programs in practice. 
\end{abstract}

\section{Introduction}
\label{sec: intro}

\subsection{Overview}
An isomorphism between two graphs $G_1 = (V_1,E_1)$ and $G_2 = (V_2, E_2)$ is a bijection $f: V_1 \rightarrow V_2 $ such that for every $e = (u,v) \in E_1$, $e' = (f(u), f(v))$ is in $E_2$. The Graph Isomorphism (GI) problem is the problem that given two graphs $G_1, G_2$, determine if there exists an isomorphism between them. 

The Graph Isomorphism problem is an interesting problem. Even though it is in NP, researchers have not successfully proven it to be in P, or to be NP-complete. In fact, it is widely believed to lie in the NP-intermediate class (NPI), the non-empty class which lies in NP but outside P in the case that P $\ne$ NP \cite{Ladner1975}.

The Graph Isomorphism problem have witnessed advancement in the practical aspect more than in the theoretical aspect. In practice, notable programs include Nauty \cite{McKay1} and Traces \cite{MCKAY2} by McKay, saucy by Darga \cite{Darga}, conauto by Presa \cite{Presa}, and bliss by Junttila \cite{Junttila}. The practical GI solvers can solve GI of random graphs of about 10000 vertices quickly \cite{McKay1}. In theory, however, the best known general theoretical bound for GI problem before Babai's work was sub-exponential time $2^{O(\sqrt{nlogn}})$. The algorithm was developed by Luks and Babai in \cite{BabaiLuks} in 1983. It took more than three decades for the next best bound, quasipolynomial, to be developed. Intuitively, Babai's work pushes GI closer to the class of P, providing more evidence suggesting that GI is not NP-complete. While the algorithm's quasipolynomial runtime is still superpolynomial, it is now remarkably closer to polynomial than to exponential time. 

On the other hand, demand for parallel algorithms is increasing. Nowadays many computational tasks are connected with big data or large input. On the other hand, more and more computers are equipped with multiple cores, and we have supercomputers with millions of cores. While memory access is still a big issue in parallel computing, it is undeniable that there is an increasingly high demand for parallel algorithms.

\subsection{Result }

In this work, we prove that we can solve the Graph Isomorphism problem in parallel polynomial time using a quasipolynomial number of processors. Using Babai's algorithm as the basis of parallelization, we demonstrate that Babai's algorithm, as well as GI algorithms in general, are highly parallel, and can be sped up superpolynomially in a parallel computer. 
\begin{theorem}
The Graph Isomorphism problem can be solved using a parallel polynomial time algorithm with a quasipolynomial number of processors.
\end{theorem}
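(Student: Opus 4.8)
The plan is to take Babai's quasipolynomial-time algorithm, regard it at the level of its recursive skeleton, and parallelize its components one at a time, then check that the composition incurs only polynomial overhead in time. Recall that Babai reduces graph isomorphism (via canonical forms) to the String Isomorphism problem, which is solved by a recursion interleaving Luks's group-theoretic reduction (descending through orbits and minimal block systems), the Design Lemma (individualize $\mathrm{polylog}(n)$ points and canonically refine to extract a colored equipartition, a Johnson scheme, or a clique), the Split-or-Johnson routine, and the Local Certificates and Aggregation machinery that tames the large alternating quotients on which Luks's reduction alone stalls. Each recursive step either settles an instance on a domain of size $n$ outright or replaces it by instances on domains of size at most a constant fraction of $n$, with a quasipolynomial branching factor; consequently the recursion tree has polylogarithmic depth and quasipolynomially many nodes in total. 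I would parallelize by allocating each node of this tree its own bank of processors and running all nodes of a common depth concurrently. The overall parallel running time is then the tree depth times the worst-case parallel time of the local, non-recursive work at a single node, and the total number of processors is the number of nodes times the per-node processor count --- both quasipolynomial, and the time polynomial, provided the per-node local work is itself parallelizable in polynomial time with a quasipolynomial number of processors.

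So it comes down to parallelizing the local ingredients, which I would handle in turn. Color refinement, i.e.\ the coarsest equitable coloring, is computed in parallel logarithmic time using polynomially many processors. The $k$-dimensional Weisfeiler--Leman refinement needed inside the Design Lemma and the canonical refinement of $k$-ary relational structures is computed in parallel polynomial time with a quasipolynomial number of processors: each refinement round recomputes the colors of all $n^k$ tuples simultaneously --- each tuple aggregating, by a parallel sort, the multiset of colors of its $n$ one-point extensions --- and the refined colors are renumbered by a parallel sort. The permutation-group primitives of Luks's reduction --- orbits, minimal block systems, membership testing, pointwise and setwise stabilizers, coset transversals, and order computation --- are placed in $\mathrm{NC}$ by invoking the parallel permutation-group algorithms of Babai, Luks and Seress, which apply here because the groups occurring in the String Isomorphism recursion lie in the feasible class (composition factors cyclic of bounded order, or alternating). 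The purely combinatorial routines --- Split-or-Johnson, the manipulation of coherent, Johnson, and Steiner configurations, detection of the ``bad'' color classes the Design Lemma looks for, and extraction of the canonical structures --- are polynomial-time computations on configurations of quasipolynomial size assembled from parallel connectivity, parallel sorting, and arithmetic in the relation algebra, all of which are in $\mathrm{NC}$ relative to that size. Finally, the Local Certificates routine spawns a quasipolynomial family of independent sub-instances on strictly smaller windows and glues the resulting local groups and certificates together; each sub-instance recurses, and the gluing is itself a permutation-group construction carried out in $\mathrm{NC}$.

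The step I expect to be the main obstacle is controlling the interaction between recursion depth and the Local Certificates routine, which recursively invokes the whole String Isomorphism algorithm on its windows: one must verify that the window size strictly contracts (by a constant factor, or down to $\mathrm{polylog}(n)$), so that its contribution to the composed recursion depth is again an $O(\log n)$-order geometric series and the overall depth stays polylogarithmic --- this is essential, since the processor count is roughly the branching factor raised to the depth, and anything more than polylogarithmic depth would make it exponential rather than quasipolynomial. A tightly related difficulty is preserving canonicity under parallel execution: Babai's algorithm outputs canonical forms, so every arbitrary choice (``the least point of an orbit'', ``the first block in some order'') must be replaced by a fixed computable rule, and the outputs of the independent processor banks must be recombined by an order-independent canonical merge rather than in whatever order they happen to finish. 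Once the recursion-tree bookkeeping and these canonicity conventions are in place, the remaining points --- that no single level needs more than quasipolynomially many processors, that the parallel sorts and connectivity subroutines fit the chosen model (a CRCW PRAM, or equivalently uniform $\mathrm{NC}$ circuits), and that the concluding comparison of the two canonical forms is a trivial parallel equality test --- are routine.
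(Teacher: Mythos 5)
Your overall architecture---parallelize the recursion/individualization tree node-by-node, then argue that the local work at each node is itself parallelizable---matches the paper's decomposition into ``multiplicative bottlenecks'' and ``large computation bottlenecks,'' and your treatment of the tree, the repetitive Local Certificate calls, and the depth bookkeeping is essentially the paper's Lemmas 3 and 4. However, there is a genuine gap at what the paper identifies as the central obstacle: the \emph{number of rounds} of the refinement procedures. You assert that color refinement runs in parallel logarithmic time and that $k$-dimensional WL refinement runs in parallel polynomial time because ``each refinement round recomputes the colors of all $n^k$ tuples simultaneously.'' Parallelizing within a round is the easy part; the rounds themselves are inherently sequential (each round's input is the previous round's stable coloring), and the only general bound on the number of rounds of $k$-WL is the trivial one: at most $n^k$, since each non-final round creates at least one new color class. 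With $k=O(\log n)$ as required by the Design Lemma, that is $n^{O(\log n)}$ rounds, so your scheme yields quasipolynomial, not polynomial, parallel time, and the theorem does not follow. This is precisely where the paper does its real work: it uses Lichter's walk refinement to show that $h$ rounds of color refinement can be compressed into $O(\log h)$ rounds of 2-dimensional WL, and a gadget construction that reduces $k$-WL on $G$ to color refinement on an auxiliary graph $G'$ with $O(kn^{k+1})$ vertices, so that the whole refinement stabilizes in a polynomial number of fully parallel rounds. Without some such round-compression argument, your proof of the per-node bound fails.

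A secondary, more benign difference: for the permutation-group primitives you invoke the NC algorithms of Babai--Luks--Seress, whereas the paper instead worries about the generating sets that result from gluing quasipolynomially many sub-instances growing to quasipolynomial size, and introduces a parallel Schreier--Sims-style \emph{RefineGeneratingSet} routine to prune them back to $O(n\log n)$ generators. Your route is plausible---NC algorithms applied to a quasipolynomial-size generating set still run in time polylogarithmic in that size, hence $\mathrm{polylog}(n)$, with quasipolynomially many processors---but you should say this explicitly, since the standard statements of those results presuppose polynomial-size input generating sets, and the blow-up of generating sets under aggregation is exactly the bottleneck that needs to be named and controlled.
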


Later in this paper, we will show that Babai's work, as well as GI algorithms in general, uses a backtracking tree structure. The tree structure enables simple parallelization scheme. However, in Babai's algorithm the computation of each node in the tree is superpolynomial, and therein lies challenge of parallelizing the tree nodes. Several operations in the computation of each node that are bottlenecks and we will develop a parallelization scheme for each of them. One particular nontrivial parallelization is the parallelization of the $k$-dimensional WL refinement procedure, which is a highly sequential procedure where each iteration depends on the result of the previous iteration.

The structure of the paper is as follows. Section 2 presents preliminary techniques and a short overview of Babai's algorithm. Section 3 identifies the bottlenecks in Babai's algorithm and parallelization schemes designed to handle them. Section 4 presents our conclusion and the impact of this result.

\section{Preliminaries}
\subsection{Combinatorial techniques: Color refinement and $k$ dimensional WL refinement, individualization}
\label{subsec: combinatorial}
Color refinement has been used in most of the existing practical or theoretical GI algorithms. The idea behind color refinement is to use colors to represent similarities. Let us use $C(v)$ to denote the color of a vertex $v$. A vertex $v$ from graph $G_1$ with color $C(v)$ can only be mapped to a vertex $u$ in $G_2$ with color $C(u)$ if $C(u) \equiv C(v)$. Similarity can be evaluated using multiple criteria, for example the degree of a vertex. A common color refinement scheme is to evaluate the similarity of two vertices based on the colors of their neighbors, i.e. $C(u) \ne C(v)$ if $\{C(u')|u' \in neighbors(u)\} \ne \{C(v') | v' \in neighbors(v)\}$. After comparing the vertices and updating the new colors, the process repeats until equilibrium is achieved - no additional color class is created. Two graphs are isomorphic only if they have the same coloring after refinement. Note that we call the coloring, as well as any partition scheme, \textit{canonical} if they must be preserved by isomorphism and therefore must be agreed between two graphs.

Color refinement is a powerful tool as the GI problem for random graphs can usually be solved using one iteration of color refinement \cite{Babai2}. However, color refinement fails to distinguish graphs in hard cases, a notable example being the case of regular graphs. Therefore, an extension of color refinement has been developed by Weisfeiler-Lehman \cite{WL}: WL refinement, and then later $k$-dimensional WL refinement. Note that color refinement is usually referred to as 1-dimensional WL refinement.

The classic WL refinement is an extension of color refinement. For a graph $G = (V,E)$ and two elements $x,y  \in V$, the new color $C(x,y)$ encodes the old color and, for all $j,k$, the number of elements $z \in \Omega$ such that $C(x,z) = j$ and $C(y,z) = k$. Thus, the classic WL refinement refines the colors of not only the vertices but also the edges. From construction, the classic WL refinement is also called the 2-WL refinement.

The $k$-dimensional WL refinement further leverages the classic technique to work on tuples of size $k$ of the vertices. For a $k$-tuple $(x_1,\ldots x_k)$ of vertices, the new color encodes the old color and for all tuples $\{j_1, j_2 , \ldots j_k\}$, the number of elements $y$ such that $(\forall i < k)$ $C(x_1, \ldots, x_{i-1}, y, x_{i+1}, \ldots, x_k) = j_i$. It is worth noting that the classic WL refinement and the $k$-dimensional WL refinement can be extended to work with \textit{relational structures}, a generalization of graphs, as in Babai's paper \cite{Babai1}.

On the other hand, \textit{individualization} is a technique that can complement color refinement. Given two graphs $G_1$ and $G_2$, individualization maps a fixed vertex $v$ from $G_1$ to a vertex in $G_2$, usually by giving them a unique color. Since multiple vertices in $G_2$ could potentially be mapped to $v$ (i.e. have the same color as $v$), a GI algorithm that uses individualization has to consider all possible mappings. Therefore, individualization generates a multiplicative cost to the runtime of the program. Combining individualization and color refinement will produce a program with a backtracking search tree structure, see McKay \cite{McKay1}.

\subsection{Group theoretic techniques: Group procedures}
Group theory appears naturally in the course of solving Graph Isomorphism, because the set of automorphisms of a graph $G$ naturally makes up a group, and the set of isomorphisms between two graphs $G$ and $G'$ must be a coset of that group. 

%For example, color refinement can be seen as a procedure to restrict the possible automorphism group of a graph to a subgroup which consists of the tensor product of the automorphism groups of each color class.

Here we present some basic permutation group concepts and notations. A permutation group is a group that contains the permutations of a set of objects. The set of objects is called the permutation domain, or domain set, and we say that the permutation group acts on that domain. The symmetric group of a domain $\Omega$ is the group containing all permutations of that domain, denoting $S_\Omega$ or $S_n$ if $|\Omega| = n$. Given a permutation group $\mathcal{G}$ subgroup of $S_\Omega$ (denotes $\mathcal{G} \le S_\Omega$), a $\mathcal{G}$ orbit of a set element $a \in \Omega$ is the set of elements to which $a$ can be mapped through the group action. Orbits therefore are equivalence classes of objects that can be mapped to each other. If a group has only one orbit, then the group is called transitive i.e. every pair of objects can be mapped to each other. If a group acts transitively on a domain set, a block is a subset that either gets mapped to itself, or gets translated somewhere else entirely. The block system naturally forms a partition of the domain set, and the minimal block system is the block system with the minimum number of blocks (the block size is maximum).

In GI computation, a group is usually represented using its set of generators. We will list some useful computation group operations that appear in the literature. Given a group $\mathcal{G} \le S_n$) with generator set $A$, we can
\begin{itemize}
    \item find the orbits of $\mathcal{G}$ and transitivity of $\mathcal{G}$ by iterating through the images of the domain set elements.
    \item find $\mathcal{G}$'s minimal block system by following \cite{Helfgott}, who quotes \cite{Luks:1980} and \cite{Sims1}: fix $a \in \Omega$, for $b \ne a \in \Omega$, construct the graph with $\Omega$ its set of vertices and $\{\{a,b\}^g,g \in \mathcal{G}\}$ as edges, then the connected component containing $a$ and $b$ is the smallest block containing $a$ and $b$. The action of $\mathcal{G}$ is imprimitive if and only if the constructed graph is not connected for an arbitrary $a$ and at least one $b$, thus we obtain the block containing $\{a,b\}$ thus the block system must be non-trivial.
    \item generate the "whole" group given a set of generators by using the Furst-Hopcroft-Luks (FHL) algorithm \cite{Luks:1980}. From FHL, we can also determine whether an element belongs to a group, and determine the subgroup of a group given a polynomial time membership testing of the subgroup.
\end{itemize}
It is important to note that all these operations run in polynomial time in $n$ and $|A|$.
\subsection{Parallel computing and the PRAM model}
% Parallel computing is a type of computing where several processors execute simultaneously. The runtime of a parallel algorithm is the time that elapses from when the algorithm starts to the moment the last processor finishes execution.

In this work, we build our parallel algorithm using the Parallel RAM (PRAM) computational model. Unlike the typical Random Access Memory (RAM) model where there is one processor and instructions are executed sequentially, the PRAM model features multiple processors, local memory of each processor and global shared memory, and free read and write accesses to the global shared memory. 

\subsection{Babai's quasipolynomial time Graph Isomorphism algorithm}

In 2015, Babai showed that the String Isomorphism problem, a more general extension of GI, can be solved in quasipolynomial time. The String Isomorphism (SI) problem, given two string inputs $x$ and $y$ as functions from indices set $\Omega$ to symbols $\Sigma$ and a subgroup $\mathcal{G} \le S_\Omega$, determines whether or not there exists at least one element $\tau \in \mathcal{G}$ such that $\tau$ is an isomorphism between $x$ and $y$, i.e. $\forall i \in \Omega$ $x(i) = y(\tau(i))$. The Graph Isomorphism problem can be reduced to the String Isomorphism problem by flattening the input graphs' adjacency matrices into the strings and constructing $\mathcal{G}$ corresponding to this transformation.

\begin{theorem}
[Babai] The String Isomorphism problem can be solved in quasipolynomial time. As a result, the Graph Isomorphism problem can be solved in quasipolynomial time.
\end{theorem}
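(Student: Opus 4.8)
Since the reduction from GI to SI is described above, it suffices to solve SI within a quasipolynomial bound $q(n) = n^{O(\log^2 n)}$, where $n = |\Omega|$. The plan is a Luks-style divide-and-conquer recursion on the pair $(\mathcal{G},\Omega)$, reinforced to handle the one configuration Luks's method cannot. First I would dispose of the easy cases: if $\mathcal{G}$ is intransitive, solve SI on each orbit in turn, maintaining the set of partial string isomorphisms as a coset and chaining the recursive calls; if $\mathcal{G}$ is transitive but imprimitive, pass to a minimal block system (polynomial-time computable as in the preliminaries), recurse on the kernel of the action on blocks, and then on the induced primitive action on the blocks. The expensive branch is the primitive case, where naive Luks recursion can spawn $|\mathcal{G}|$ children --- fine when $|\mathcal{G}|$ is quasipolynomial in its degree, ruinous otherwise. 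Here I would invoke Cameron's CFSG-based classification: a primitive group whose order is superquasipolynomial in its degree is, up to bounded index and a product-action wrapper, the action of $A_k$ or $S_k$ on $t$-subsets of $[k]$. Unwinding this reduces everything to one normalized situation --- a \emph{giant representation} $\phi:\mathcal{G}\to S_m$ with $\mathrm{Im}(\phi)\supseteq A_m$ and $m\ge C\log n$ --- and the whole difficulty is concentrated in solving SI over such a $\mathcal{G}$.

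To attack the giant case I would use \emph{local certificates}. Fix $k=\lceil\log_2 n\rceil$; for each of the $\binom{m}{k}\le q(n)$ ``test sets'' $W\subseteq[m]$, compute the group of automorphisms of the restricted string instance that are supported inside $W$ (those lying in $\phi^{-1}$ of the pointwise stabilizer of $[m]\setminus W$) and record whether their $\phi$-image is a giant on $W$. Each such computation is a smaller SI instance in which no induced action has a giant quotient on a set much larger than $k$, so Luks recursion handles it inside the budget; it returns either ``$W$ is full'' or a witnessing structure on $W$ that the giant does not respect. Then I would \emph{aggregate}: either every test set is full, in which case the automorphisms assembled from the certificates have $\phi$-image $A_m$ (or $S_m$) on a large portion of $[m]$, and that much symmetry lets us canonically match the two strings class by class and recurse on strictly smaller pieces; or some test set is non-full, and the witnessing structures combine into a canonical (isomorphism-invariant) coloring or relational structure on $[m]$ that is not $A_m$-invariant.

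Given such a canonical non-trivial structure I would run the Design Lemma and the Split-or-Johnson routine. Its guaranteed output is one of: (i) a canonical coloring of $[m]$ with a color class of relative size at most $2/3$, which caps the degree of any surviving giant by $2m/3$; or (ii) a canonically embedded Johnson scheme $J(s,t)$ covering a constant fraction of $[m]$, into whose $s$-element ground set we descend, with $\binom{s}{t}=\Theta(m)$ so that $s$ is far below $m$. In every branch the controlling parameter --- the degree on which a giant acts --- strictly shrinks (by a constant factor or down toward its square root), so the recursion tree has depth $O(\log n)$, or $O(\log^2 n)$ counting the nested ladders; multiplying the depth against the $q(n)$ branching contributed by the local-certificate enumeration and the residual Luks steps keeps total running time at $q(n)=n^{O(\log^2 n)}$, which is quasipolynomial. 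The GI statement then follows from the reduction mentioned above.

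The hard part will be the combinatorial--group-theoretic core: proving that the local certificates can \emph{always} be aggregated into either a genuinely usable global giant symmetry or a genuinely non-trivial canonical structure (never something vacuous), and proving that Split-or-Johnson, fed any such structure, must make quantitative progress --- a balanced color split or a Johnson descent --- rather than looping without shrinking $m$. A second, unavoidable dependency is the CFSG-based classification of large primitive groups, which is what permits the reduction to the single giant-representation normal form at all; and throughout one has to bookkeep that per-level branching and recursion depth compose to stay within $q(n)$, since even a mild slippage (say $q(n)$ children at each of $\Theta(\log^2 n)$ levels without the parameter dropping) would break the bound.
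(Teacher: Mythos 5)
Your outline is a faithful and in fact more detailed sketch of exactly the argument this paper relies on: it cites Babai's theorem as prior work and summarizes the same five ingredients (Luks reduction, Local Certificates, Aggregate Certificates, Design Lemma, Split-or-Johnson) without reproving them. Your proposal correctly identifies the structure, the role of the CFSG/Cameron classification, and the genuinely hard aggregation and progress-measure arguments, so it matches the paper's (cited) approach.
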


In the interest of brevity, we will go through the novel concepts that make the algorithm succeed and will give a high level description of the algorithm. %Some improvements made by Helfgott \cite{Helfgott} will be incorporated to make the algorithm simpler to explain and easier to reason about. 
We refer the readers to the work of Babai \cite{Babai1} and Helfgott's explanatory document \cite{Helfgott} to know more about the details of the algorithm.

Babai's algorithm includes several key features that combine and complement each other beautifully to make the algorithm succeed:
\begin{itemize}
    \item Divide and conquer, and recursion: at almost any point in the algorithm, the high-level goal is to either recursively reduce the underlying group into a combination of its subgroup and cosets, or the domain set into smaller subsets, and consider smaller problems having those smaller subgroups or subsets as input. This continues until the input sizes become sufficiently small such that each problem instance can be solved in polynomial time using brute force. The algorithm runs in quasipolynomial time if at any recursive step the problem can be broken down into a quasipolynomial number of smaller instances, and the depth of recursion is at most polylogarithmic.
    \item Group theoretic and combinatorial techniques: Both combinatorial techniques and group theoretic techniques are used in Babai's algorithm. Intuitively, combinatorial techniques, namely partitioning techniques, work in a top-down fashion - those techniques try to partition the graph based on high level asymmetry. In contrast, group theoretic techniques approach the problem bottom up and try to construct the automorphism group element by element. Generally, combinatorial techniques work well when the partitioning structure is highly asymmetric, whereas group theoretical technique work well when the partitioning structure is highly symmetric.
    \item Symmetry vs asymmetry: Both symmetry and asymmetry benefit the algorithm. Asymmetry enables combinatorial techniques to reduce the problem to smaller instances as stated above. On the other hand, symmetry enables canonical splitting of the domain set into the symmetric part and the asymmetric part, where the automorphism group of the symmetric part is easy to construct due to its structure.
    %TODO: 
\end{itemize}  

There are five main procedures in Babai's algorithm: \textbf{Luks' reduction}, \textbf{Local Certificate}, \textbf{Aggregate Certificates}, \textbf{Design Lemma} and \textbf{Split-or-Johnson}. Luks' reduction is the first step in each recursion, where Babai employs the framework developed by Luks \cite{Luks:1980} and reduces the bottleneck of GI to a certain type of group, whose structure enables the transformation of $\Omega$ to an auxiliary domain $\Gamma$. Local Certificate examines a logarithmic-size subset of $\Gamma$ to determine whether it is highly symmetric with respect to the current underlying group. Aggregate Certificates combines the results of Local Certificates of all those subsets to infer properties about the global symmetry or asymmetry. If there are a lot of symmetric test sets, then the global symmetries allow efficient recursion. Otherwise, if there are a lot of asymmetric test sets, the graph must be highly asymmetric. Design Lemma takes a highly asymmetric relational structure as the input and produces a uniprimitive coherent configuration, which is an important algebraic structure closely related to the 2-dimensional WL refinement, %TODO: add ref
in the worst case. Finally, Split-or-Johnson takes a uniprimitive coherent configuration and produces either a "split" i.e. a canonical partition of the domain set, or a large canonically embedded graph whose structure is well-known and thus enables efficient recursion.

\section{Parallel Graph Isomorphism algorithm}
\subsection{Identifying the bottlenecks}
In this section, we identify all the superpolynomial bottlenecks in Babai's algorithm. We refer the reader to the technical details of Babai's paper \cite{Babai1} or to Helfgott's detailed time complexity analysis of Babai's algorithm \cite{Helfgott} to verify that the bottlenecks mentioned are indeed all superpolynomial bottlenecks in the algorithm. We will also identify the bottlenecks that are easy to solve, i.e. embarrassingly parallel ones.

In Babai's algorithm, there are two main types of bottleneck: multiplicative bottlenecks and large computation bottlenecks.

\subsubsection{Multiplicative bottlenecks}
Multiplicative bottlenecks are bottlenecks that arise from either individualization steps or recursion steps. These steps all create change to the structure of the program similar to branching out from a tree, and therefore introduce a multiplicative cost to the program.

Individualization is a step that occurs in the Design Lemma procedure, the Split-or-Johnson procedure and the Aggregate Certificates procedure. In fact, individualization is the main action of the Design Lemma procedure. As stated in subsection \ref{subsec: combinatorial}, individualization induces a multiplicative cost to the program's runtime.

Recursion or reduction of a problem instance to multiple smaller instances also cause a multiplicative cost to the program's runtime. The reduction steps create smaller problem instances with no guarantee about runtime (we can only reason about the maximum depth of recursion). The steps occur in the reduction of the working group to its subgroup and the cosets, the reduction to a Johnson group in Luks framework, and in the recursive call to a quasipolynomial number of instances of SI on smaller inputs in the Local Certificate procedure.

Due to their tree-like structure, the computations of the multiplicative bottlenecks are embarrassingly parallel.

\begin{lemma}
\label{lemma: 1}
Multiplicative bottlenecks can be computed in parallel in polynomial time,
such that any multiplicative cost to the runtime of the programs becomes multiplicative
cost to the number of processors. Furthermore, the parallelization is work-preserving.
\end{lemma}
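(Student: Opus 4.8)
The plan is to make precise the informal claim that individualization and recursion both induce a tree-structured computation, and then argue that such trees parallelize cheaply. First I would set up the model: a multiplicative bottleneck is a step that, on an input of size $n$, produces a bounded-out-degree branching, spawning $b(n)$ child subproblems (with $b(n)$ at most quasipolynomial in $n$ for each of the three sources: individualization picks one of at most $n$ color-preimages; group-to-subgroup-and-coset reduction produces at most the index many cosets; the Local Certificate recursion produces a quasipolynomial number of smaller SI instances), where each child can be constructed from the parent in polynomial sequential time and independently of its siblings. The total computation is then a rooted tree of depth $d(n)$ which is polylogarithmic by Babai's recursion-depth guarantee, so the number of leaves (and nodes) is $b(n)^{d(n)}$, which is quasipolynomial.

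Next I would describe the PRAM schedule. Assign one processor (or one processor-group) to each node of the branching tree; since children are independent and constructible in polynomial time from the parent, the tree can be expanded level by level in $d(n)$ rounds, each round costing the polynomial time of one node-expansion, for a total parallel time $\operatorname{poly}(n)\cdot d(n) = \operatorname{poly}(n)$. The number of processors used is the number of tree nodes, i.e. at most $b(n)^{d(n)}$, which is exactly the multiplicative factor by which the sequential runtime blows up — establishing the ``multiplicative cost becomes multiplicative number of processors'' claim. For work-preservation I would observe that the total work done by this schedule, summed over all processors and rounds, is at most (number of nodes) $\times$ (per-node polynomial cost), which is the same quantity as the sequential running time of the branching computation up to polynomial factors; so no asymptotic work is wasted. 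One subtlety to address here is processor allocation: since the tree is not known in advance, I would note the standard PRAM technique of over-allocating to the maximum possible width $b(n)^{d(n)}$ and having each active node, upon expansion, hand off contiguous blocks of processor indices to its children, which is doable in logarithmic time per round via prefix sums and hence does not affect the polynomial time bound.

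The main obstacle I anticipate is not the tree-parallelization itself — that is genuinely straightforward — but rather being careful about what happens \emph{at the leaves and at the join points}. At the leaves each subproblem is solved (eventually, by the rest of the algorithm), but the result of a multiplicative bottleneck must be \emph{aggregated}: for individualization and for the group/coset reduction one must combine the answers of the children (e.g. union the coset representatives, or combine automorphism-group generators) into a single answer for the parent. I would argue this aggregation is itself polynomial per node and can be performed bottom-up over the tree in $d(n)$ additional rounds, using the FHL machinery cited in the preliminaries for the group-theoretic combination steps (which runs in polynomial time in $n$ and the number of generators), so it does not disturb either the time bound, the processor count, or work-preservation. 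The one place I would flag for extra care is the Aggregate Certificates step, where the ``combination'' of local certificates is more than a naive union — but since the claim of this lemma is only about the \emph{branching}/multiplicative structure (the heavy combinatorics of Aggregate Certificates is a ``large computation'' bottleneck handled separately), I would scope the lemma to the branching and defer that analysis.
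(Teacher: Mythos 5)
Your proposal is correct and follows essentially the same route as the paper's own (very brief) argument: assign one processor per child subproblem at each branching point, exploit the independence of siblings, and observe that the branching factor times depth converts the multiplicative runtime cost into a multiplicative processor cost. Your version is considerably more careful than the paper's two-sentence justification --- in particular the explicit processor-allocation scheme and the decision to scope the aggregation/join-point work out of this lemma (the paper indeed handles that separately as an ``aggregating results'' bottleneck) --- but it is the same idea, not a different one.
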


This case is straightforward. For the recursion/reduction steps, we assign a processor for each problem instance created. Similarly for the individualization steps, we assign a processor for each of the vertex choice possibilities. Since the recursion/reduction problem instances as well as the possible cases of individualization are independent, they can be solved in parallel independently. Therefore, their multiplicative cost to the runtime of the program becomes multiplicative cost to the number of processors needed.

\subsubsection{Large computation bottlenecks}
Large computation bottlenecks are the superpolynomial tasks which appear throughout Babai's algorithm. They are: 1) the large repetitive tasks, 2) the aggregating results and procedures on large groups tasks, and 3) the $k$-dimensional WL refinement task.

Large repetitive tasks are computational tasks that consist of a process that repeats itself multiple times independently. There are two such tasks in Babai's algorithm: one is the repeating of Local Certificate procedure for $\binom{|\Gamma|}{k}$ test sets where $k = \log{}n$ at the start of Aggregating Certificates, and the other is the search through possibly all tuples of $\le k$ elements to see if individualizing those elements yields the desired outcome. Again, due to each iteration being independent, those tasks are embarrassingly parallel.

\begin{lemma}
\label{lemma: 2}
Large repetitive tasks can be computed in parallel in polynomial time. Furthermore, the parallelization is work-preserving.
\end{lemma}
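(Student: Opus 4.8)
The plan is to reuse the template of \cref{lemma: 1}: a large repetitive task is by definition a collection of mutually independent iterations, so I would give each iteration its own private block of processors and run the whole collection concurrently. First I would pin down the two tasks and their iteration counts. In the Local Certificate loop that opens Aggregate Certificates the iterations are indexed by the $\binom{|\Gamma|}{k}$ test sets, and with $k = \Theta(\log n)$ this is $n^{O(\log n)}$, i.e. quasipolynomially many; in the tuple search inside Design Lemma and Split-or-Johnson the iterations are indexed by the $\le k$-element tuples of the domain, again at most $n^{O(\log n)}$. In both cases no iteration reads or writes data produced by another: a test set is certified on its own, and a tuple is individualized and its refinement outcome checked on its own.

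Next I would set up the processor allocation. Suppose one iteration runs in parallel time $t(n)$ on $p(n)$ processors; for the tuple search this holds with $t(n)$ and $p(n)$ polynomial, because the per-tuple check is an individualization followed by a refinement computation, which is already polynomial even sequentially, and for the Local Certificate loop the bound on a single iteration is inherited from the parallelizations of the steps it invokes (its recursive SI calls, covered by \cref{lemma: 1}, and its internal aggregation). Launching all $m = n^{O(\log n)}$ iterations at once then uses $m\,p(n)$ processors and finishes in parallel time $t(n) + O(\log m)$, where the additive $O(\log m) = O(\log^2 n)$ term covers broadcasting the shared input to the $m$ blocks and collecting their $m$ outputs through a balanced binary tree. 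Hence the task as a whole runs in parallel polynomial time, and the repetition factor $m$ has migrated from a multiplicative cost in time to a multiplicative cost in the processor count.

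For the work-preserving claim I would compare totals. Sequentially the task costs $\sum_{i=1}^{m} W_i$, the sum of the per-iteration works; the parallel version performs the same $\sum_{i=1}^{m} W_i$ units of useful work, since no iteration is recomputed, plus $O(m\log m)$ units for the fan-out/fan-in tree, which is of lower order because each $W_i$ is already at least polynomial in $n$. Thus the parallel work matches the sequential work up to lower-order terms, and the parallelization is work-preserving.

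The main obstacle here is not mathematical: the tasks are embarrassingly parallel essentially by inspection, just as in \cref{lemma: 1}. The point that needs care is bookkeeping, namely that ``one iteration runs in parallel polynomial time'' is not self-evident for a Local Certificate iteration, which itself branches and aggregates; this lemma must therefore be composed with the companion parallelizations of those sub-tasks, and one must verify that the $O(\log^2 n)$ routing overhead introduced at this nesting level, accumulated over the $O(\log n)$-depth recursion of Babai's algorithm, stays polylogarithmic --- which it does.
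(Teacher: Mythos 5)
Your proposal is correct and follows essentially the same approach as the paper, which justifies \cref{lemma: 2} only by the one-line observation that the iterations are independent and hence embarrassingly parallel, exactly as in \cref{lemma: 1}. Your additional bookkeeping (the quasipolynomial iteration counts, the fan-out/fan-in tree, and the explicit caveat that a single Local Certificate iteration is only parallel-polynomial after composing with the other parallelization lemmas) is more careful than what the paper provides but does not change the argument.
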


Aggregating results tasks arise when the algorithm needs to combine results from multiple smaller problem instances or subprocedures. This type of task occurs in multiple places in Babai's algorithm. For example, almost all reduction steps, such as the reduction step in Luks framework or the reduction step in the Local Certificate procedure, require combining the resulting groups or cosets (represented by their generating sets) into one group by merging their generating sets. The Aggregating Certificates procedure also combines the certificates into one group in a similar fashion. This is not problematic memory-wise, since we can avoid memory complications between processors by using the PRAM model of computation. However, this creates large group (in term of size of generating sets). As we mentioned above, many group operations have runtime polynomial in the size of the generating set. Therefore, this causes a superpolynomial bottleneck. In the next part of the paper we develop a generating set refinement procedure to deal exclusively with this type of bottleneck.

The last type of bottleneck is the $k$-dimensional WL refinement. This operation deserves a separate subsection, since it is a highly sequential and highly non-trivial obstacle to parallelization. Our approach to parallelize this operation will be discussed in detail in subsection \ref{subsec: parallel WL}.

\subsection{Group generators refinement}
\label{sec: refine generating}
As stated previously, group operations usually run in polynomial time of the size of the generating set. Therefore, operations on a group with a large (non-polynomial) generating set is problematic. 

In this subsection, we develop a subroutine that refines the generating set of any permutation group. 

\begin{theorem}
Given a permutation group $G \le S_n$ with generating set $A$, we can create a new generating set $A'$ of $G$ which has size $O(n\log{}n)$. The process can be done in parallel in $O(n\log{}n)$ steps using $|A|$ number of processors.
\end{theorem}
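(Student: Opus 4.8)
The plan is to start from the arbitrary generating set $A$ and repeatedly "sift" its elements through a partial strong generating structure, in the spirit of the Furst--Hopcroft--Luks algorithm but organized so that each element can be processed in parallel. First I would fix the natural point stabilizer chain $G = G^{(0)} \ge G^{(1)} \ge \cdots \ge G^{(n)} = 1$, where $G^{(i)}$ fixes the first $i$ points of $\{1,\ldots,n\}$. For each level $i$ we maintain a transversal table $T_i$ storing, for each point $\beta$ in the orbit of point $i{+}1$ under $G^{(i)}$, a coset representative $t_{i,\beta}$; the union $\bigcup_i \{t_{i,\beta}\}$ is the new generating set $A'$, and since each $T_i$ has at most $n$ entries and there are at most $n$ levels, $|A'| = O(n^2)$ — to get the sharper $O(n\log n)$ bound I would instead store, at each level, not a full transversal but an $O(\log n)$-size generating set for the orbit action obtained by a pointer-doubling/Schreier-tree construction, so that the total is $n$ levels times $O(\log n)$ generators. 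The set $A'$ generates $G$ because the transversals at all levels together satisfy the Schreier property, which is exactly the invariant I will maintain.

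The core subroutine is membership sifting: given $g \in G$, push it down the chain, at level $i$ using $T_i$ to find the representative $t$ with $g$ and $t$ sending $i{+}1$ to the same point, replace $g$ by $t^{-1}g \in G^{(i+1)}$, and recurse. The steps, in order, are: (1) build the chain skeleton and empty tables; (2) for each generator $a \in A$ in parallel, run a sift, and whenever the sift reaches a level where the needed representative is absent, record $a$'s residue as a new table entry at that level; (3) because adding a new entry at level $i$ can create new Schreier generators $t^{-1}ag$ that must themselves be sifted, iterate this closure process; (4) once no table changes, read off $A'$ as the union of the (compressed) tables. To keep the depth polylogarithmic I would do the orbit computation at each level by parallel pointer doubling (so $O(\log n)$ rounds per level), and pipeline the $n$ levels, giving the claimed $O(n\log n)$ step bound; the $|A|$ processors are used to sift the $|A|$ original generators concurrently, and the Schreier generators produced during closure are bounded in number by the table sizes, hence absorbed into the same processor budget by reusing processors across rounds.

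The main obstacle is the closure step (3): adding an element to a transversal can in principle cascade, and a naive analysis gives too many rounds or too many Schreier generators to sift within $O(n\log n)$ steps with only $|A|$ processors. I would control this by a potential argument — each table entry, once created, is permanent, so there are at most $O(n\log n)$ creation events total across all levels, and I would schedule the sifting of Schreier generators level by level from the top down so that level $i$ is finalized before level $i{+}1$ is touched, making the dependency chain linear in the number of levels rather than branching. A secondary subtlety is bounding $|A'|$ by $O(n\log n)$ rather than $O(n^2)$: this needs the observation that a transitive (or orbit) action on $m$ points has a generating set of size $O(\log m)$, applied level-wise, together with care that compressing the transversals does not destroy the Schreier property used to prove $\langle A'\rangle = G$. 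Finally I would note the correctness certificate: the resulting $A'$ is a strong generating set for the fixed chain, so $\langle A'\rangle = G$ follows immediately, and the step/processor counts follow by summing the per-level costs.
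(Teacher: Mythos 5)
Your proposal follows the same FHL/Schreier--Sims skeleton as the paper (stabilizer chain, per-level coset tables, sifting the elements of $A$ in parallel), but it misses the one idea that actually delivers the $O(n\log n)$ bound, and the substitute you offer for it does not hold up. The paper's argument is greedy and global: it adds elements of $A$ to $A'$ \emph{one at a time}, each new element being (claimed to be) outside the group currently generated by $A'$, so by Lagrange each addition at least doubles $|\langle A'\rangle|$; since $|\langle A'\rangle|\le n!$, at most $\log_2(n!)<n\log n$ elements are ever added, which simultaneously bounds $|A'|$ and the number of rounds of the outer loop. You instead build full transversals (size $O(n^2)$) and then propose to compress each level to $O(\log n)$ elements by appealing to the claim that ``a transitive action on $m$ points has a generating set of size $O(\log m)$.'' As a statement about generating the whole transitive group this is false (the sharp general bound is $O(m/\sqrt{\log m})$, and it is attained); what is true is the weaker fact that a transitive group of degree $m$ has a \emph{transitive subgroup} generated by $O(\log m)$ elements, but you would then have to (a) prove that such generators can be found within your time/processor budget, and (b) show that replacing a transversal by such a set preserves the strong-generation property you need for $\langle A'\rangle=G$ \emph{and} still supports the sifting used for membership testing --- sifting needs an explicit representative per orbit point, or else words of Schreier-tree length, and you do not account for either. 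You flag this as a ``subtlety'' requiring ``care,'' but it is in fact the entire content of the size bound; as written the proof only establishes $|A'|=O(n^2)$.

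A second, independent gap is your closure step (3). Once you introduce Schreier generators $t^{-1}ag$ and insist on sifting them to completeness, their number is not ``bounded by the table sizes'': at each level it is (current generating set size) $\times$ (transversal size), and these products cascade down the chain --- this is exactly why the sequential FHL analysis is a high-degree polynomial, and it is not absorbed into $|A|$ processors or $O(n\log n)$ rounds without a real argument. The paper sidesteps this entirely by never forming Schreier generators: only elements of the original $A$ are ever inserted into the tables, one per round, and termination is guaranteed by the order-doubling count rather than by table closure. So while your overall architecture matches the paper's, the two load-bearing steps --- the $O(n\log n)$ cardinality bound and the control of the closure cascade --- are missing or incorrect in your write-up.
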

\begin{proof}
The subroutine is based on the FHL algorithm (also called Schreier-Sims algorithm), an algorithm described by Luks in \cite{Luks:1980}. Given $G$ some permutation group with domain set $\Omega = \{x_1, x_2, \ldots x_n \}$, the FHL algorithm works with a tower of stabilizers $G = G_0 > G_1 > G_2 > \ldots > G_{n-1} = {e}$, where $G_i$ is the group of permutations that keeps all elements $x_1, x_2 \ldots x_i$ fixed. The algorithm builds a set of $C_i$ of representatives of $G_i/G_{i-1}$ and thus $\bigcup_{i \le j < n-1} C_j$ generates $G_i$ for all $i < n$. 

We note that the FHL algorithm itself runs in polynomial time of the size of the group's generating set. Our subroutine is a derivation of the FHL algorithm, but is parallel so that it can handle input groups with large generating sets.

\begin{algorithm}[H]
\caption{MembershipTest}
\textbf{Input}: permutation $x$, set of coset representatives $C_0, C_1, \ldots C_{n-2}$ \\
\textbf{Output}: true/false, index $j$
\begin{enumerate}
    \item $j = 0$
    \item while $j < n-1$ do:
    \begin{enumerate}
        \item \textbf{if} $\gamma^{-1}x' \in G_{i+1}$ for some $\gamma \in C_i$ \textbf{then} $x' \leftarrow \gamma^{-1}x'$
        \item \textbf{else} \textbf{return} (false, $j$)
    \end{enumerate}
    \item \textbf{return} (true, -1)

\end{enumerate}
\end{algorithm}
\begin{algorithm}[H]
\caption{RefineGeneratingSet}
\textbf{Input}: $A = \{x_1, x_2, \ldots x_N\}$ a generating set of a group $G \le S_n$ \\
\textbf{Output}: $A' = \{x'_1, x'_2,\ldots x'_{N'}\}$ a generating set of $G$ where $N' = O(n\log{}n)$
\begin{enumerate}
    \item $C_i \leftarrow e$ for $i = 0$ to $n-2$
    \item $A' \leftarrow \{e\}$ 
    \item \textbf{while} true:
    \begin{enumerate}
        \item \textbf{parfor} all elements $x \in A$ \textbf{do}: 
        \begin{enumerate}
            \item call MembershipTest($x$, $C_1, \ldots C_{n-2}$)
        \end{enumerate}
        \item \textbf{if} all element return true, \textbf{break}
        \item $x \leftarrow$ a randomly chosen element from the ones that return false
        \item $A' \leftarrow \{A' \cup x\}$, $C_j = \{C_j \cup x\}$ for $j$ = the index returned by MembershipTest($x$, $C_1, \ldots C_{n-2}$)
    \end{enumerate}
    \item return $A'$
\end{enumerate}
\end{algorithm}

Proof of correctness: Denote $G(A')$ the group generated by generating set $A'$. The membership testing method is derived from FHL algorithm: for each element $x$, for increasing $i$ the algorithm searches for a representative of the coset of $x$ modulo $G_{i+1}$ in the set $C_i$. If such a representative is found, then $x$ is already contained in the group created by $A'$. The RefineGeneratingSet procedure builds $A'$ by gradually adding elements from $A$ to it gradually. In each iteration, elements of $A$ are tested to see if they are a member of  $G(A')$. It chooses one such element and updates $A'$ and the $C_i$s. The algorithm stops when there are no more elements in $A$ can be added to $A'$, in other word all elements in $A$ are contained $G(A')$.

Proof of time complexity: First, it is easy to see that the membership testing function runs in polynomial time of $n$ and $A'$ (note that $\bigcup_i C_i = A'$). Thus, the \textit{parfor} loop runs in polynomial time using $|A|$ processors. Therefore, the runtime of the algorithm is polynomial if the number of iterations in the main while loop (line 3), which is $|A'|$, is polynomial. In each of its iterations, the main while loop finds one element that is not contained in $G(A')$ and add it to $A'$. Doing this results in a $G(A')_{new}$ that contains $G(A')_{old}$ as a proper subgroup, meaning $G(A')_{new}$ contains $G(A')_{old}$ and at least one coset. Therefore, $|G(A')_{new}| \ge 2|G(A')_{old}|$ and thus $|G(A')| \ge 2^k$ where $k = |A'|$. We note that $G(A')$ is always a subset of the symmetric set, thus $|G(A')| \le n!$. Therefore, $2^k \le n!$ which means $k \le \log{}(n!) < n\log{}n$. 

Therefore, the newly created generating set $A'$ has size polynomial and the subroutine can be done in parallel polynomial time using $|A|$ processors.
\end{proof}

\subsection{Parallelization of $k$-dimensional WL refinement} 
We here introduce one of the main contributions of this paper: the parallelization of the $k$-dimensional WL refinement. 

In Babai's algorithm, the $k$-dimensional WL refinement procedure appears in the Split-or-Johnson procedure. Recall the definition of $k$-dimensional WL refinement in subsection \ref{subsec: combinatorial}. Similar to color refinement, $k$-dimensional WL refinement is a iterative procedure where the next iteration depends on the results of previous iterations. No proven upperbound on the number of iterations of $k$-dimensional WL refinement exists other than the trivial $n^k$ one (each iteration  procedures at least one new color and there are at most $n^k$ colors). Therefore, the $k$-dimensional WL refinement procedure is a highly non-trivial obstacle for parallelization.

In this subsection, we will derive a parallelization scheme for the $k$-dimensional $WL$ refinement. We will first develop a parallelization scheme for the 1-dimensional WL refinement by leveraging it to 2-dimensional WL refinement, and then develop a transformation between performing $k$-dimensional WL refinement and performing 1-dimensional WL refinement. 

\subsubsection{Color refinement and 2-dimensional WL refinement}
\label{sec: CR WL}
Let us first consider 1-dimensional WL refinement i.e. color refinement. The number of iterations of color refinement has a tight upperbound $O(n)$, as there are graphs where color refinement requires a linear number of iterations such as the line graph. 

We will now take a different perspective by looking at a pair of vertices distinguished in the $h^{th}$ iteration of color refinement. Consider a graph $G = (V,E)$ with initial coloring $\mathcal{C}$ of the vertices. We can expand the initial coloring to 2-tuples by assigning color of $\mathcal{C}(u,u)$ by the color of $u$ and $\mathcal{C}(u,v)$ based on whether or not $(u,v) \in E$. Then, we employ the idea of \textit{walk refinement} by Lichter \cite{Lichter}. Given a $G = (V, E)$ with an initial edge coloring $C$, the $k$ walk refinement is a refinement scheme that for $G$ and $C$ determines a new coloring $\mathcal{C}_{W[k]}$ defined by: 
\begin{equation*}
    \mathcal{C}_{W[k]}(u,v) =  \{ \{ {\mathcal{C}}(u, w_{i_1}, w_{i_2}, \ldots, w_{i_{k-1}}, v) \}, w_{i_j} \in V \}
\end{equation*} 
where ${\mathcal{C}}(v_1, v_2 ,\ldots v_k) = (C(v_1,v_2), C(v_2, v_3) \ldots C(v_{k-1}, v_k))$.

Intuitively, a $k$-walk refinement refines the color of a pair $(u,v)$ by comparing the colors of all the possible walks of length $h$ from $u$ to $v$. Note that a $k$-walk refinement also implicitly contains $k'$-walks by having $k-k'$ first steps stationary.

\begin{theorem}Given a graph $G = (V,E)$ with two vertices $u$ and $v$ in $V$ distinguished in the $h^{th}$ iteration of color refinement, $h >1$. 
Then $u,v$ can be distinguished by a $2h$-walk refinement, i.e. $\mathcal{C}_{W[2h]}(u,u) \ne \mathcal{C}_{W[2h]}(v,v)$. 

% Then there must be a pair of paths of length $\le h$ from $u$ to $\bar{u}$ and $v$ to $\bar{v}$, where $\bar{u}, \bar{v}$ are vertices in $V$ and $\bar{u}, \bar{v}$ are distinguished in the first iteration. Furthermore, these conditions must hold:
% \begin{romanenumerate}
% \item these paths must be the shortest from $u$ to $\bar{u}$ and $v$ to $\bar{v}$,
% \item every pair of vertices of the same distance $l$ from $\bar{u}$ and $\bar{v}$ in the 2 paths are distinguished in the respective $(l+1)^{th}$ iteration   , and
% \item there are no pair of vertices $\bar{u}', \bar{v}'$ distinguished in the first iteration having shorter paths to $u, v$.
% \end{romanenumerate} 
\end{theorem}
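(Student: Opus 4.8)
The plan is to prove, by induction on $h$, the statement $(\star_h)$: \emph{the coloring $\mathcal{C}_{W[2h]}$ restricted to the diagonal refines $C_h$}, where $C_i$ denotes the coloring after $i$ rounds of color refinement, so that $C_{i+1}(x)$ records $C_i(x)$ together with the multiset $\{\{C_i(y):y\in N(x)\}\}$. Since $u$ and $v$ being distinguished in iteration $h$ means $C_h(u)\ne C_h(v)$, $(\star_h)$ immediately yields $\mathcal{C}_{W[2h]}(u,u)\ne\mathcal{C}_{W[2h]}(v,v)$, which is the theorem. A standing tool will be \emph{monotonicity}: by the stationary-step remark just before the theorem, for $k'<k$ the coloring $\mathcal{C}_{W[k]}$ refines $\mathcal{C}_{W[k']}$ on every pair, since a $k'$-walk pattern sits inside $\mathcal{C}_{W[k]}(x,y)$ as exactly the subfamily of $k$-walk patterns whose first $k-k'$ entries equal the diagonal color $\mathcal{C}(x,x)$; in particular $\mathcal{C}_{W[2h]}(x,x)$ determines $\mathcal{C}_{W[2(h-1)]}(x,x)$.

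The base case is $(\star_2)$. Here one unfolds $\mathcal{C}_{W[4]}(u,u)$, the multiset over triples $(a,b,c)\in V^3$ of the pattern $(\mathcal{C}(u,a),\mathcal{C}(a,b),\mathcal{C}(b,c),\mathcal{C}(c,u))$, and checks by a finite case analysis — organized by which of $a,b,c$ coincide with $u$, with one another, or sit on an edge — that it already determines $C_2(u)=(C_1(u),\{\{C_1(w):w\in N(u)\}\})$; the walks of shape $(u,w,w,c,u)$ with $w\in N(u)$ are what expose the neighbor colors $C_0(w)$ and, through the adjacency recorded in the last coordinate, enough of each neighbor's local structure. The hypothesis $h>1$ enters precisely here: $\mathcal{C}_{W[2]}(u,u)$ only sees $\{\{(\mathcal{C}(u,y),\mathcal{C}(y,u)):y\in V\}\}$, which pins down $C_0(u)$ and $\deg(u)$ but not the colors of the neighbors of $u$, so it cannot determine $C_1(u)$; two extra units of walk length are exactly what buys one further level of neighbor colors, which is why the bound reads $2h$ rather than $h$.

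For the inductive step assume $h>2$ and $(\star_{h-1})$, and suppose $C_h(u)\ne C_h(v)$. If $C_{h-1}(u)\ne C_{h-1}(v)$ we are done: $(\star_{h-1})$ gives $\mathcal{C}_{W[2(h-1)]}(u,u)\ne\mathcal{C}_{W[2(h-1)]}(v,v)$ and monotonicity upgrades this to $\mathcal{C}_{W[2h]}$. Otherwise $C_{h-1}(u)=C_{h-1}(v)$, so from $C_h(u)\ne C_h(v)$ the neighbor multisets $\{\{C_{h-1}(w):w\in N(u)\}\}$ and $\{\{C_{h-1}(w):w\in N(v)\}\}$ must differ. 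The crux is then to show that $\mathcal{C}_{W[2h]}(u,u)$ retains the \emph{neighbor-indexed} multiset $M_u:=\{\{\,\mathcal{C}_{W[2(h-1)]}(w,w)\,:\,w\in N(u)\,\}\}$, with each neighbor's certificate kept as an atomic color rather than merged across neighbors. Granting this: $(\star_{h-1})$ says the walk color $\mathcal{C}_{W[2(h-1)]}(w,w)$ determines $C_{h-1}(w)$, so applying that map elementwise to $M_u$ recovers $\{\{C_{h-1}(w):w\in N(u)\}\}$; since this differs from the analogous invariant of $v$, we get $M_u\ne M_v$, hence $\mathcal{C}_{W[2h]}(u,u)\ne\mathcal{C}_{W[2h]}(v,v)$, completing $(\star_h)$.

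The main obstacle is exactly this retention step. Inside the multiset of length-$2h$ color patterns of closed walks at $u$ one must single out precisely the walks that leave $u$ along an edge to a single neighbor $w$, perform a closed walk of length $2h-2$ based at $w$, and return to $u$ along an edge — using only what a pattern actually records, namely colors and adjacencies, never vertex identities. My plan is to insert a stationary step immediately after the first step and immediately before the last, turning the second and second-to-last pattern entries into the \emph{diagonal} color of the vertex visited there: this tags that vertex by its current color, and combined with an inclusion-exclusion count over the possible coincidence patterns of the remaining intermediate vertices it should let one peel off the sum over neighbors and recover $M_u$ without the per-neighbor certificates collapsing into one another. Making this bookkeeping precise — in particular verifying that the extraction is exact and that contributions from different neighbors cannot cancel — is the delicate part; everything else reduces to the monotonicity remark and the finite base-case verification.
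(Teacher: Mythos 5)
Your overall strategy is the same as the paper's: induct on $h$, and reduce the claim for $(u,v)$ at level $h$ to the claim for neighbor pairs at level $h-1$ by restricting attention to the $2h$-walks that step from $u$ to a neighbor $w$, perform a closed $(2h-2)$-walk at $w$, and step back to $u$. Your framing is in fact tidier than the paper's (you state the induction as a refinement statement $(\star_h)$ and make the monotonicity-via-stationary-steps reduction explicit, where the paper leaves both implicit), and you have correctly isolated the crux: one must show that the single multiset $\mathcal{C}_{W[2h]}(u,u)$ \emph{retains the neighbor-indexed} multiset $M_u=\{\{\mathcal{C}_{W[2(h-1)]}(w,w):w\in N(u)\}\}$, with each neighbor's closed-walk certificate kept atomic rather than aggregated across neighbors.

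That retention step is, however, exactly where your proposal stops being a proof. A walk pattern records only the sequence of pairwise colors $\mathcal{C}(\cdot,\cdot)$ along the walk, never vertex identities, so from a pattern $(\mathcal{C}(u,w_1),\ldots,\mathcal{C}(w_{2h-1},u))$ one can certify that the first and last steps traverse edges, but not that $w_1=w_{2h-1}$ --- and that coincidence is precisely what makes the middle segment a closed walk based at a single neighbor. Your proposed repair does not obviously close this: tagging via a stationary step only exposes the \emph{diagonal color} of $w_1$ (i.e.\ its current vertex color), which does not individuate it among same-colored neighbors, so contributions from distinct neighbors with equal colors can still mix, and the inclusion--exclusion over coincidence patterns is asserted rather than carried out; you yourself flag that verifying exactness and non-cancellation is "the delicate part." To be fair, the paper's own justification of the very same step is a single sentence ("$\mathcal{C}(u,u_i)$ the initial color of an edge is unique from the definition"), which addresses only the edge-step issue and not the coincidence issue, so you have located the genuine difficulty more honestly than the paper does; but as submitted, the central combinatorial claim of the argument remains unproven in your proposal. (The clean way out is the one in Lichter's walk-refinement framework, where colorings of \emph{pairs} are tracked throughout, so that closed walks at a vertex are identified by the pair $(w,w)$ rather than reconstructed from a diagonal multiset.) Separately, a small bookkeeping point: your base case argument for $h=2$ again aggregates walks $(u,w,y,\cdot,u)$ over all neighbors $w$ simultaneously, so it recovers only the \emph{sum} over $w\in N(u)$ of each neighbor's local data, not the multiset $\{\{C_1(w):w\in N(u)\}\}$; this is the same retention issue one level down and would need the same missing lemma.
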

\begin{proof}
Proof by induction. The statement is trivial true for $h = 1$ where the algorithm simply returns the initial colors.

For $u, v$ a pair of vertices distinguished in the $h^{th}$ iterations, they must satisfy these conditions:
\begin{romanenumerate}
\item the set of colors of the neighbors of $u$ = $\{u_1, u_2 \ldots u_j \}$ and the set of colors of the neighbors of $v$ = $\{v_1, v_2,\ldots v_j \}$ (assuming they have the same number of neighbors) must be different in the $h^{th}$ iteration.
\item In every iteration $h' < h-1$, $u$ and $v$ must have exactly the same neighbor color set.
\end{romanenumerate}

Consider a pair of vertices $u'$ and $v'$ in the neighbors of $u$ and $v$ respectively distinguished by the $(h-1)^{th}$ iteration of color refinement. By induction hypothesis, $u'$ and $v'$ are also distinguished by the $2(h-2)$ walk refinement, i.e. $\mathcal{C}_{W[2h-2]}(u',u') \ne \mathcal{C}_{W[2h-2]}(v',v')$.

Now, consider $\mathcal{C}_{W[2h]}(u,u)$ and $\mathcal{C}_{W[2h]}(v,v)$. These colors are distinguished using $\{{\mathcal{C}}(u, w_1$ $, \ldots w_{2h-1}, u \}$ and $\{{\mathcal{C}}(v, w_1, \ldots w_{2h-1}, v\}$ . We will only consider a subset of all the walks which is $\{u, u_i, w_1, \ldots w_{2k-3}, u_i, u \}$ and $\{v, v_i, w_1, \ldots w_{2k-3}, v_i, v \}$ for $i = 1,2 \ldots j$ and $u_1, u_2,$ $\ldots u_j$ are the neighbors of $u$, and similar for $v$. We will prove that these two subsets of walks result in a different set of colors, and that these two subsets having different colors result in $\mathcal{C}_{W[2h]}(u,u) \ne \mathcal{C}_{W[2h]}(v,v)$.

First, since $u$ and $v$ are distinguished in the $h^{th}$ iteration of color refinement, their set of neighbor colors must be different after the $(h-1)^{th}$ color refinement iteration. From the induction hypothesis, this means that $\{\{\mathcal{C}_{W[2h-2]}(u_i,u_i)\}\} \ne \{\{\mathcal{C}_{W[2h-2]}(v_i,v_i)\}\}$ where the $u_i, v_i$ are neighbors of $u,v$ respectively. This means that $\{ \{ {\mathcal{C}}(u_i, w_1, \ldots, w_{2h-3}, u_i) \} \} \ne \{\{{\mathcal{C}}(v_i, w_1, \ldots w_{2h-3}, v_i)$. Therefore, ${\mathcal{C}}\{(u, u_i, w_1, \ldots w_{2k-3}, u_i, u) \}$ $\ne$ ${\mathcal{C}}\{(v, v_i$ $, w_1, \ldots w_{2k-3},$ $v_i, v) \}$, thus those two subsets of walks result in a different set of colors.

Second, we will prove that these subsets of walks have a unique role in all the $2h$ walks considered. This is indeed true, because $\mathcal{C}(u,u_i)$ the initial color of an edge is unique from the definition.

Thus, if $u$ and $v$ are distinguished in the $h^{th}$ iteration of color refinement, then they can be distinguished by a $2h$-walk refinement.
\end{proof}

However, Lichter \cite{Lichter} proved that a $2h$-walk refinement can be simulated by $\log{}(2h)$ iterations of 2-dimensional WL refinement. 

\begin{theorem} [Lichter] $k$-walk refinement can be simulated with $\left \lceil{\log{}k}\right \rceil $ iterations of Weisfeiler-Leman refinements.
\end{theorem}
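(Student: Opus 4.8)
The plan is to prove that $k$-walk refinement is subsumed by $\lceil \log k \rceil$ rounds of $2$-dimensional WL refinement by a ``repeated squaring'' / composition argument on walk colorings. The key observation is that the color $\mathcal{C}_{W[k]}(u,v)$ aggregates, over all intermediate vertices $w_{i_1},\dots,w_{i_{k-1}}$, the sequence of edge colors along the walk $u \to w_{i_1} \to \cdots \to v$; this is a ``path algebra''-style product, and the multiset of products over all concatenations of a walk of length $a$ with a walk of length $b$ is exactly the data that produces $\mathcal{C}_{W[a+b]}$. So the strategy is: first show that one round of the classical $2$-WL refinement applied to a coloring $D$ refines (is at least as fine as) the coloring obtained from $D$ by the ``$2$-walk'' composition $D \mapsto D_{W[2]}$ (where $D_{W[2]}(u,v) = \{\!\{ (D(u,w), D(w,v)) : w \in V \}\!\}$). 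Second, show that the composition operator behaves submultiplicatively under iteration: applying the $2$-walk composition to the coloring $\mathcal{C}_{W[m]}$ yields a coloring at least as fine as $\mathcal{C}_{W[2m]}$. Iterating this doubling $\lceil \log k \rceil$ times starting from the initial edge coloring $C = \mathcal{C}_{W[1]}$ reaches a coloring at least as fine as $\mathcal{C}_{W[k]}$; and since each doubling step is dominated by a single $2$-WL round, $\lceil \log k \rceil$ rounds of $2$-WL suffice.

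The key steps, in order, are as follows. (1) Set up notation: fix the initial edge coloring $C$ and write $\chi^{(0)} = C$ for the $0$-th $2$-WL coloring, with $\chi^{(t+1)}(u,v)$ encoding $\chi^{(t)}(u,v)$ together with the multiset $\{\!\{ (\chi^{(t)}(u,w), \chi^{(t)}(w,v)) : w \in V \}\!\}$. (2) Prove the \emph{doubling lemma}: for every $t \ge 0$, the coloring $\chi^{(t+1)}$ is at least as fine as the coloring that partitions pairs $(u,v)$ by the pair $\big(\chi^{(t)}(u,v), \{\!\{(\chi^{(t)}(u,w),\chi^{(t)}(w,v)) : w\}\!\}\big)$ — this is immediate from the definition of $2$-WL, the point being only to recast it in walk language. (3) Prove by induction on $t$ that $\chi^{(t)}$ is at least as fine as $\mathcal{C}_{W[2^t]}$: the base case $t=0$ is $\chi^{(0)} = C = \mathcal{C}_{W[1]}$ and one uses that $\mathcal{C}_{W[1]}$ determines $\mathcal{C}_{W[2^0]}$ trivially; for the inductive step, if $\chi^{(t)}$ refines $\mathcal{C}_{W[2^t]}$, then from $\chi^{(t)}(u,w)$ and $\chi^{(t)}(w,v)$ one can recover $\mathcal{C}_{W[2^t]}(u,w)$ and $\mathcal{C}_{W[2^t]}(w,v)$, hence the multiset over $w$ determines $\mathcal{C}_{W[2^{t+1}]}(u,v)$ (a length-$2^{t+1}$ walk splits uniquely at its midpoint into two length-$2^t$ walks, and summing over the midpoint recovers the target multiset), so $\chi^{(t+1)}$ refines $\mathcal{C}_{W[2^{t+1}]}$. (4) Handle non-powers-of-two: since $\mathcal{C}_{W[m]}$ determines $\mathcal{C}_{W[m']}$ for $m' \le m$ (shorter walks are embedded via stationary steps, as already noted in the excerpt), taking $t = \lceil \log k \rceil$ gives $2^t \ge k$, so $\chi^{(\lceil \log k\rceil)}$ refines $\mathcal{C}_{W[2^t]}$ which refines $\mathcal{C}_{W[k]}$. (5) Conclude.

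The main obstacle I expect is step (3), specifically making the ``midpoint splitting'' argument fully rigorous when the walk length $2^{t+1}$ is split into two halves whose \emph{internal} colors must themselves already be the refined walk colors rather than raw edge-color sequences — one has to be careful that $\mathcal{C}_{W[2^t]}(u,w)$ really is a function of the raw data $\mathcal{C}(u,\ldots,w)$ that appears inside $\mathcal{C}_{W[2^{t+1}]}(u,v)$, i.e. that the recursion ``telescopes'' correctly and no information is lost at the splice point. This amounts to checking that the map (sequence of edge colors along a $2^{t+1}$-walk) $\mapsto$ (pair consisting of the $\mathcal{C}_{W[2^t]}$-value of the first half and of the second half, indexed by the midpoint) is well-defined as a function on multisets — which it is, but writing the multiset bookkeeping cleanly (especially that each length-$2^{t+1}$ walk has a \emph{canonical} midpoint and so the multiset over all such walks factors as intended) is the delicate part. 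A secondary, minor subtlety is the off-by-one in reconciling ``$\lceil \log k \rceil$ iterations'' with the convention for round $0$; I would fix the convention that round $0$ is the initial coloring so that the induction above produces exactly the claimed bound. Everything else is routine once the doubling lemma is stated correctly.
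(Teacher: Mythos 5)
The paper offers no proof of this statement: it is imported with attribution from Lichter \cite{Lichter}, so there is nothing in-paper to compare against. Your repeated-squaring argument is correct, and it is essentially the argument of that reference. The heart of it, the midpoint-splitting identity, does hold: for fixed midpoint $w$ the length-$2m$ walk colors from $u$ to $v$ through $w$ form exactly the multiset of concatenations $s\cdot t$ with $s\in\mathcal{C}_{W[m]}(u,w)$ and $t\in\mathcal{C}_{W[m]}(w,v)$ (multiplicities multiplying), so the multiset over $w$ of the pairs of half-walk colors determines $\mathcal{C}_{W[2m]}(u,v)$; your induction and the stationary-step embedding for non-powers of two then go through, giving $\chi^{(\lceil\log k\rceil)}$ at least as fine as $\mathcal{C}_{W[k]}$. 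Two small points deserve to be made explicit rather than assumed. First, your doubling lemma needs one round of $2$-WL to record the \emph{composition} multiset $\{\!\{(\chi^{(t)}(u,w),\chi^{(t)}(w,v)):w\in V\}\!\}$, whereas the paper's Section 2 definition records $(C(x,z),C(y,z))$; these are interchangeable here only because the initial $2$-tuple coloring is symmetric, so that $\chi^{(t)}(w,v)$ and $\chi^{(t)}(v,w)$ determine one another by an easy induction --- state this, or work with the composition convention throughout. Second, the claim that $\mathcal{C}_{W[m]}$ determines $\mathcal{C}_{W[m']}$ for $m'\le m$ via stationary prefixes requires that the diagonal pairs $(w,w)$ occupy their own initial color classes, so that a stationary step is recognizable from the walk's color sequence; the paper's initial coloring guarantees this, but it is a hypothesis your step (4) silently uses. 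With those two remarks added, the proof is complete.
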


From the above two theorems, we immediately come to our conclusion.
\begin{corollary} \label{cor: CR WL}
Given a graph $G = (V,E)$, if color refinement stabilizes in $h$ iterations, then we can use 2-dimensional WL refinement to simulate the result using $\log{}2h$ iterations.
\end{corollary}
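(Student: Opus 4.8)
The plan is to derive Corollary~\ref{cor: CR WL} by chaining the two theorems just proved, with a little bookkeeping about which iteration each pair of vertices is separated in. First I would make the hypothesis ``color refinement stabilizes in $h$ iterations'' precise: it means that for every pair $u,v \in V$ that the stable coloring separates, there is some $h' \le h$ such that $u$ and $v$ already receive different colors after $h'$ iterations of color refinement. Fix such a pair. If $h' > 1$, the walk-refinement theorem above gives $\mathcal{C}_{W[2h']}(u,u) \neq \mathcal{C}_{W[2h']}(v,v)$; and since a $2h$-walk refinement implicitly contains every $2h'$-walk refinement with $h' \le h$ (keep the first $2h - 2h'$ steps of the walk stationary, as noted right after the definition of $k$-walk refinement), we also get $\mathcal{C}_{W[2h]}(u,u) \neq \mathcal{C}_{W[2h]}(v,v)$. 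The case $h' = 1$ is immediate, since then $u$ and $v$ already differ in the initial expanded coloring, which $\mathcal{C}_{W[2h]}$ records on the diagonal. Consequently the vertex coloring read off from the diagonal entries of the $2h$-walk refinement of the initial $2$-tuple coloring (with $\mathcal{C}(u,u)$ the color of $u$ and $\mathcal{C}(u,v)$ encoding adjacency) is at least as fine as the stable color refinement coloring.

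Next I would apply Lichter's theorem with $k = 2h$: the single $2h$-walk refinement step is simulated by $\lceil \log(2h) \rceil$ iterations of $2$-dimensional WL refinement, run from the same edge-colored initialization. Composing the two facts, after $\lceil \log(2h)\rceil$ rounds of $2$-WL the color of each diagonal pair $(u,u)$ already determines the stable color refinement color of $u$, so projecting the $2$-WL coloring onto the diagonal reproduces (indeed refines) the output of color refinement. That is exactly the claimed simulation in $\lceil\log(2h)\rceil$ iterations, which is the $\log 2h$ of the corollary statement up to the ceiling.

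I do not expect a genuine obstacle here: the corollary is a composition of results already in hand, and the only points that need care are (i) routing pairs that are separated strictly before iteration $h$ through the ``stationary prefix'' observation so that the single parameter $2h$ works uniformly, (ii) checking that the walk-refinement theorem, phrased for the diagonal entries $\mathcal{C}_{W[2h]}(\cdot,\cdot)$, is precisely what lets us recover a vertex coloring, and (iii) making sure the $2$-tuple initialization fed to Lichter's theorem is the same edge-colored structure used in the walk-refinement theorem, so that the two simulation statements can be glued. Once these are lined up, the iteration count $\lceil\log(2h)\rceil$ transfers verbatim from Lichter's theorem.
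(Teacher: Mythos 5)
Your proposal is correct and follows exactly the route the paper intends: the paper gives no explicit proof, asserting the corollary is ``immediate'' from the $2h$-walk-refinement theorem and Lichter's theorem, and your write-up simply makes that chaining explicit (including the stationary-prefix observation needed to handle pairs separated at iterations $h' < h$, which the paper already flags after the definition of $k$-walk refinement). No gap; your version is just a more careful spelling-out of the same argument.
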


We note that this result has significant implications beyond Babai's algorithm. 

\begin{theorem} 
There is a parallel algorithm to simulate the result of color refinement in logarithmic time.
\end{theorem}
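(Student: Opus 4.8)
\medskip
\noindent\emph{Proof plan.} The plan is to combine Corollary~\ref{cor: CR WL} with a parallelization of a single round of $2$-dimensional WL refinement. First I would observe that color refinement on an $n$-vertex graph stabilizes after at most $n$ rounds: every round that changes the coloring strictly increases the number of color classes, and there are at most $n$ of them, so the quantity $h$ of Corollary~\ref{cor: CR WL} satisfies $h \le n$. Hence, by that corollary, the stable coloring produced by color refinement can be read off the ``diagonal'' pairs $(u,u)$ of the coloring obtained after only $m := \lceil \log 2h \rceil \le \lceil \log 2n \rceil = O(\log n)$ rounds of $2$-dimensional WL refinement. Since $h$ is not known in advance, I would either run the safe upper bound $\lceil \log 2n \rceil$ rounds, or stop at the first round after which the induced vertex coloring is stable under one more step of color refinement (a property testable in parallel with the same primitives used below); either way $O(\log n)$ rounds suffice.

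It then remains to show that one round of $2$-dimensional WL refinement is computable in $O(\log n)$ parallel time with a polynomial number of processors. Given the current coloring $C$ of the $n^2$ ordered pairs as an array of names in $\{1,\dots,n^2\}$, I would (i) with $n^3$ processors form, in $O(1)$ time, the pair $(C(x,z),C(y,z))$ for every triple $(x,y,z)$; (ii) for each pair $(x,y)$ independently, sort the $n$ pairs $\{(C(x,z),C(y,z))\}_{z}$ by a parallel sorting algorithm (e.g.\ Cole's merge sort), obtaining a canonical fixed-length encoding of the multiset in $O(\log n)$ time with $n$ processors, hence $O(\log n)$ time and a polynomial number of processors over all pairs; and (iii) prepend $C(x,y)$ to get one descriptor string per pair and sort the $n^2$ descriptors to assign each distinct descriptor a canonical integer name in $\{1,\dots,n^2\}$, again in $O(\log n)$ parallel time with polynomially many processors. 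The same primitives also produce the initial $2$-tuple coloring from $G$ and perform the stability test of the first step. Chaining the $O(\log n)$ rounds then yields the color-refinement stable coloring in $O(\log^2 n)$ parallel time (i.e.\ polylogarithmic time, with a logarithmic number of refinement rounds) using a polynomial number of processors.

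The hard part will be step (iii): the new colors must receive canonical names that are invariant under isomorphism --- so that the two graphs being compared end up with the same names --- while the renaming is carried out in $O(\log n)$ parallel depth. This is exactly what a single parallel integer sort of the descriptor strings achieves, but one has to be careful to encode each descriptor as a fixed-length, lexicographically comparable string built from the already-canonical multiset encoding of step (ii), so that one sort suffices; naively nesting sorts would cost a $\log n$ factor per round. A second, more minor point is ensuring that the output is exactly color refinement's partition rather than a strict refinement of it when $h$ is unknown, which the equitability test (or, alternatively, the observation from Section~\ref{sec: CR WL} that the diagonal of the $2$-WL coloring after $m$ rounds already coincides with the color-refinement coloring) resolves.
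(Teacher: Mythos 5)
Your proposal follows essentially the same route as the paper: bound the number of color-refinement rounds by $n$, invoke Corollary~\ref{cor: CR WL} to replace them by $O(\log n)$ rounds of $2$-dimensional WL refinement, and parallelize each round. The one substantive difference is in the per-round cost: you implement a round via parallel sorting of descriptors in $O(\log n)$ depth, giving $O(\log^2 n)$ total, whereas the paper asserts each round is constant-depth and so claims $O(\log n)$ overall; your accounting of the canonical-renaming step is the more defensible one, though it technically establishes a polylogarithmic rather than strictly logarithmic bound.
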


\begin{proof}
We can use 2-dimensional WL refinement to simulate color refinement as stated above. Note that each step of the 2-dimensional WL refinement is completely parallelizable (i.e. can be done in constant time by a work-preserving parallel algorithm), therefore the runtime of the parallel algorithm is the number of iterations, which is $O(\log{}n)$ as from the above two theorems.
\end{proof}

Achieving the result of color refinement in logarithmic time is remarkably useful for practical GI solvers, which mainly use color refinement instead of the complicated $k$-dimensional version. Furthermore, this result implies that any decision problem which can be NC-reduced to color refinement is in NC, the class of problems that can be solved efficiently  in a parallel computer.

\subsubsection{$k$-dimensional WL refinement and color refinement} 
Now we will present a way to simulate $k$-dimensional WL refinement using color refinement

\begin{theorem} 
\label{theorem: WL final}
Given a graph $G = (V,E)$, there is a graph $G'$ such that performing $k$-dimensional WL refinement on $G'$ can be simulated by performing color refinement on $G'$.
\end{theorem}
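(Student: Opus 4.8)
The plan is to build from $G$ an auxiliary (vertex-colored) graph $G'$ on which \emph{ordinary} color refinement reproduces, on a distinguished set of vertices, the $k$-dimensional WL partition of the $k$-tuples of $G$ --- that is, I read the statement as saying that $k$-WL on the original graph is simulated by color refinement on the auxiliary graph $G'$. The vertex set of $G'$ has three layers: a \emph{tuple vertex} $\bar x$ for each $\bar x=(x_1,\dots,x_k)\in V^k$, whose initial color is the initial $k$-WL color of $\bar x$ (the ordered induced substructure on $\bar x$ together with the pattern of coordinate coincidences); an \emph{aggregator vertex} $a_{\bar x,y}$ for each $\bar x\in V^k$ and each $y\in V$; and a \emph{marker vertex} $b^{\,i}_{\bar x,y}$ for each such $\bar x,y$ and each position $i\in\{1,\dots,k\}$, given initial color $i$. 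The only edges join $b^{\,i}_{\bar x,y}$ to the tuple vertex $\bar x[y/i]$ (where $\bar x[y/i]$ denotes $\bar x$ with its $i$-th coordinate overwritten by $y$) and to $a_{\bar x,y}$, and join $a_{\bar x,y}$ to $\bar x$; the three layers carry disjoint sets of initial colors, so color refinement never merges vertices from different layers. Note $|V(G')|=n^{O(k)}$, which for $k=O(\log n)$ is quasipolynomial and is exactly what forces the quasipolynomial processor bound later on.

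The heart of the argument is a lemma that color refinement on $G'$ carries out $k$-WL on $G$ up to a constant factor in the number of rounds. In the \emph{forward} direction I would show, by induction on rounds, that three color-refinement steps simulate one $k$-WL step: after one step each marker $b^{\,i}_{\bar x,y}$ has absorbed the current color of $\bar x[y/i]$ together with its own label $i$; after a second step each aggregator $a_{\bar x,y}$ has absorbed the \emph{ordered} $k$-tuple of colors $\big(C_{kWL}(\bar x[y/1]),\dots,C_{kWL}(\bar x[y/k])\big)$ (the positions are recoverable because the marker labels are distinct); and after a third step each tuple vertex $\bar x$ has absorbed its old color together with the multiset over $y\in V$ of those tuples --- which is precisely the definition of the refined $k$-WL color of $\bar x$. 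Iterating to stabilization shows that color refinement on $G'$, restricted to the tuple layer, is at least as fine as the $k$-WL partition. For the \emph{backward} direction I would run a simultaneous induction establishing that the color-refinement color of any vertex of $G'$ after $t$ rounds is a function of the $k$-WL colors (after $O(t)$ rounds) of the tuples lying in its bounded-radius gadget neighbourhood; at the respective fixpoints this shows color refinement on $G'$ is no finer than $k$-WL on the tuple layer, so the two stable partitions of $V^k$ coincide.

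I expect the backward direction to be the only real obstacle, and the reason is that color refinement is undirected: a tuple vertex $\bar x$ hears not only from the aggregators $a_{\bar x,y}$ it is ``supposed'' to, but also from every marker $b^{\,i}_{\bar x',y'}$ with $\bar x'[y'/i]=\bar x$, hence indirectly from the tuples $\bar x'$. One could try to suppress this by orienting the gadget edges, but plain color refinement cannot genuinely orient an edge, so instead I would argue that the leak is harmless: for a fixed position $i$ the tuples $\bar x'$ with $\bar x'[y'/i]=\bar x$ are exactly those of the form $\bar x[z/i]$ with $z\in V$ (and $y'$ forced to equal $(\bar x)_i$), i.e. precisely the family of tuples that $k$-WL itself scans when it updates $\bar x$ at coordinate $i$. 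Consequently the information that flows back into $\bar x$ is already a function of the $k$-WL color of $\bar x$ --- here one uses that $k$-WL is run to a fixpoint, so it is closed under the one extra level of aggregation the gadget introduces --- which is exactly what the simultaneous induction needs. The remaining work is bookkeeping: pinning down the constant-factor round correspondence and the stabilization argument, and --- if one wants the logarithmic-depth corollary rather than just the reduction --- composing this construction with Corollary~\ref{cor: CR WL} to simulate the resulting color refinement by $2$-dimensional WL.
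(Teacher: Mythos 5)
Your construction is essentially identical to the paper's: your tuple vertices are its base layer, your aggregators $a_{\bar x,y}$ are its $u_i$ nodes, your position-labelled markers $b^{\,i}_{\bar x,y}$ are its $v_{i,j}$ nodes, with the same edges and the same initial-coloring scheme, and your two-directional simulation argument (CR on $G'$ is at least as fine as, and no finer than, $k$-WL on the tuple layer) matches the paper's ``out-layer/in-layer'' argument. Your explicit treatment of the backward direction --- the information leaking into a tuple vertex through markers of other gadgets, and why that leak is already determined by the $k$-WL fixpoint --- is in fact more careful than the paper's rather informal handling of the same point, but it is the same proof.
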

\begin{proof}
Recall that the $k$-dimensional WL refinement determines the new color of a tuple $(w_1, w_2 \ldots w_k)$ based on the number of $\{ y \in V | (\forall i < k) (C(w_1,\ldots w_{i-1}, y, w_{i+1}, \ldots, w_k) = c_i)\}$ for all tuples of color $c_1, c_2 \ldots c_k$.

Denote $V$ $= \{w_1, w_2 ,\ldots w_n\}$. We construct $G'$ as follows:
For each $k$ tuple of vertices $(w_{i_1}, w_{i_2}, \ldots w_{i_k})$ in $G$, create a corresponding vertex $x$ in $G'$. Let us write $x \approx (w_{i_1}, w_{i_2} \ldots w_{i_k})$, and call all the $x$s the \textit{base layer}.
% For each of these vertices, construct the auxiliary $k$ neighbors $u_1, u_2 \ldots u_k$, each connected to $u$, and for each $i$ connect $u_i$ with the vertices corresponding to $(w_1, \ldots w_{i-1}, x, w_{i+1}, \ldots w_k)$. The initial color of each $k$-tuple vertices in $G'$ is similar to the initial color of the corresponding $k$-tuple in $G$, and all the auxiliary $u_i$ are given a unique same color.
For each vertex $x$ of $G'$ that corresponds to the tuple $(w_{i_1}, w_{i_2} \ldots w_{i_k})$, construct two layers of auxiliary nodes. The first layer consists of $n$  nodes $u_1, u_2, \ldots u_n$. Connect each of the $u_i$ with $x$. 
The second layer consists of $nk$ nodes $v_{1,1}, v_{1,2}, \ldots v_{1,k}, v_{2,1}, v_{2,2} \ldots v_{n,k}$. Connect each $v_{i,j}$ with $u_i$ and with the node that corresponds to the $k$-tuple $(w_{i_1}, w_{i_2}, \ldots, w_{i_{j-1}}, w_i, w_{i_{j+1}}, \ldots w_{i_k})$. We say that these nodes in the auxiliary $u$ and $v$ layer \textit{come out} from $x$ in order to distinguish them with the auxiliary nodes that expand from other vertices, or \textit{come in}, to $x$, and call the layer of $u$ nodes that expand from $x$ as constructed above the \textit{out-layer} and the layer of $y$ nodes that connect to $x$ the \textit{in-layer}.  The initial color of each $x$ in $G'$ is similar to the initial color of the corresponding $k$-tuple in $G$, all the auxiliary $u_i$ are given a same new color, and for each $j$ all the auxiliary $v_{i,j}$ are given a same new color. The construction scheme of the out-layers of node $x$ in $G'$ corresponding to a $k$-tuple $(w_{i_1}, w_{i_2} \ldots w_{i_k})$ is as illustrated in the following figure. 

\begin{figure}[H]
    \centering
    \includegraphics[width = 0.8\textwidth]{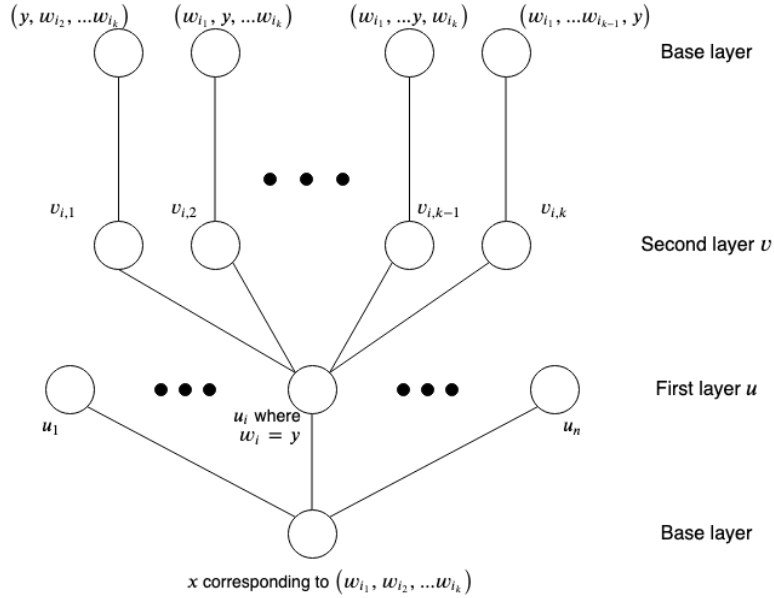}
    \caption{Illustration of constructing the auxiliary out-layers from a vertex $x$ in $G'$ to simulate performing $k$-dimensional WL refinement on $G$}
    \label{fig:expansion}
\end{figure}

For vertex $x \in G'$ corresponding to the tuple $(w_{i_1}, w_{i_2} \ldots w_{i_k})$ of vertices in $G$, denote $\overrightarrow{W}_j^y$ as the tuple $(w_{i_1}, \ldots w_{i_{j-1}}, y, w_{i_{j+1}}, \ldots w_k)$ and $\overrightarrow{x}_j^y$ the corresponding vertex. Intuitively, for each $x \approx (w_1, w_2 \ldots w_k)$ the auxiliary $u_i$ is used to capture the set of tuples $\overrightarrow{W}_j^{w_i}$ for all $j \le n$, and the auxiliary $v_{i,j}$ with distinct initial colors for different $j$ are used to make sure that the exact ordering $j$ of the sequence of colors of $\overrightarrow{W}_j^{w_i}$ affects the colors of $u_i$. Note that the different layers are always distinguished since their initial colors are different.

% We will prove that each iteration of $k$-dimensional WL refinement on $G$ can be simulated by two iterations of color refinement on $G'$. Proof by induction. For the base case 
We will prove that color refinement on $G'$ results in the same coloring of the base layer nodes as performing $k$-dimensional WL refinement on $G$.

Consider 2 $k$-tuples $W = (w_{i_1}, w_{i_2} \ldots w_{i_k})$ and $W' = (w_{i'_1}, w_{i'_2} \ldots w_{i'_k})$ of vertices in $G$ that are not distinguished at the end of $k$-dimensional WL refinement. In this case, the tuples' corresponding vertices $x$ and $x'$ in $G'$ must have the same out-layer. Indeed, each $i$ the $u_i$ nodes captures the color determined by the set of colors of $\overrightarrow{x}_j^{w_i}$. Since $W$ and $W'$ are not distinguished by the $k$-dimensional WL refinement, the set of $\overrightarrow{x}_j^{w_i}$ and $\overrightarrow{x'}_j^{w_i}$ must be symmetric. Therefore, the $u$ nodes of $x$ and $x'$ must be identical up to permutation. On the other hand, $x$ and $x'$ must also have the same in-layer, because the color of each of the $v$ nodes coming from $\overrightarrow{x}_j^{y}$ depends on the color of $x$ and of $\overrightarrow{x}_j^{y}$, which must be symmetric between $x$ and $x'$. 

Now, if $W$ and $W'$ are distinguished at the end of $k$-dimensional WL refinement, then it is straightforward to see that $x$ and $x'$ are distinguished by the end of color refinement on $G'$. This is because the out-layer $v_{i,j}$ of each $u_i$ node $x$ captures the colors of $\overrightarrow{x}_j^{w_i}$ not preserved under permutation due to the initial coloring of $v_{i,j}$. Therefore, asymmetries in the colors of $\overrightarrow{x}_j^{w_i}$ and $\overrightarrow{x'}_j^{w_i}$ (which are detected at some point because $W$ and $W'$ are distinguished) lead to asymmetries in the colors of the auxiliary $u_i$ of $x$ and $x'$, which in turn leads to different colors of $x$ and $x'$.

% Else if the $k$-dimensional WL refinement has not distinguish tuples $(w_1, w_2 \ldots w_k)$ and $(w'_1, w'_2 \ldots w'_k)$ by iteration $h$, then color refinement on $G'$ has not distinguish the corresponding pair of vertices in $G'$ as well as the associated set of auxiliary nodes of the pair of vertices by iteration $3h$.

\end{proof}

\subsubsection{Parallelizing $k$-dimensional WL refinement}
\label{subsec: parallel WL}
We arrive at our main goal of this subsection.
\begin{theorem} \label{theorem: k dim}
Given a graph $G = (V,E)$, we can perform $k$-dimensional WL refinement, where $k = O(\log{}n)$, in parallel in polynomial time using a quasipolynomial number of processors. 
\end{theorem}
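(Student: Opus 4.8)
The plan is to chain the two reductions already in hand. By Theorem~\ref{theorem: WL final} the outcome of $k$-dimensional WL refinement on $G$ is read off from the base layer of a color-refinement computation on the auxiliary graph $G'$; and by the parallel color-refinement algorithm established above (which runs in logarithmic time via Corollary~\ref{cor: CR WL}) that color-refinement computation can itself be carried out in parallel very quickly. So the whole argument reduces to checking that building $G'$ and then color-refining it both respect the ``polynomial parallel time, quasipolynomial processors'' budget once the dependence on $k = O(\log n)$ is tracked.

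First I would bound $G'$. In the construction in the proof of Theorem~\ref{theorem: WL final}, $G'$ has one base-layer vertex per $k$-tuple of vertices of $G$, and hanging off each such vertex an out-layer of $n$ nodes $u_i$ together with $nk$ nodes $v_{i,j}$; hence $|V(G')|,|E(G')| = O(k\,n^{k+1})$. Since $k = O(\log n)$ we get $n^{k} = 2^{O(\log^2 n)}$, so $|G'|$ is \emph{quasipolynomial} in $n$. The graph can be assembled in parallel polynomial time with a quasipolynomial number of processors: give one processor to each of the $n^{k}$ $k$-tuples of $G$, and have it create the tuple's base vertex, its $n$ out-layer nodes $u_i$, its $nk$ nodes $v_{i,j}$, and the incident edges --- including the edge joining $v_{i,j}$ to the base vertex of $\overrightarrow{W}_j^{w_i}$, whose index is computable in $O(\log n)$ arithmetic --- and assign the initial colors (base colors inherited from $G$, a single fresh color for all $u$'s, one fresh color per $j$ for the $v_{i,j}$'s) in the same pass. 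Each processor does only $O(nk)$ work.

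Next I would run color refinement on $G'$ using the fast parallel schedule: by Corollary~\ref{cor: CR WL} and the preceding theorem, the stabilization of color refinement on $G'$ is simulated in $O(\log|G'|)$ iterations, each iteration a work-preserving constant-time parallel step of $2$-dimensional WL refinement on $G'$. Plugging in $|G'| = 2^{O(\log^2 n)}$ turns this into $O(\log^2 n)$ iterations --- polylogarithmic, hence certainly polynomial, parallel time --- using $\mathrm{poly}(|G'|) = 2^{O(\log^2 n)}$ processors, i.e.\ quasipolynomially many in $n$. By Theorem~\ref{theorem: WL final} the resulting colors of the base-layer vertices are exactly the $k$-dimensional WL colors of $G$, which would complete the proof.

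The step I expect to be the real content --- and the reason the hypothesis $k = O(\log n)$ cannot be dropped --- is the size-control in the second paragraph: the reduction of Theorem~\ref{theorem: WL final} inflates the instance by roughly a factor $n^{k}$, which is quasipolynomial precisely when $k = O(\log n)$ and would be genuinely exponential already for $k = n^{\varepsilon}$. Everything else is bookkeeping: that the per-processor construction work and the index arithmetic for the $v_{i,j}$ edges stay polynomial, that ``$\mathrm{poly}$ in $|G'|$'' and ``$\log|G'|$'' translate into ``quasipolynomial in $n$'' and ``polylogarithmic in $n$'' respectively, and that the simulation guarantee of Theorem~\ref{theorem: WL final} is indifferent to whether color refinement on $G'$ is executed iteration-by-iteration or under the accelerated $2$-dimensional-WL schedule.
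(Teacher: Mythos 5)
Your proposal is correct and follows essentially the same route as the paper: reduce $k$-dimensional WL on $G$ to color refinement on $G'$ via Theorem~\ref{theorem: WL final}, bound $|G'|$ by roughly $kn^{k+1}$ (quasipolynomial since $k=O(\log n)$), and then run the accelerated $2$-dimensional WL simulation of Corollary~\ref{cor: CR WL} with each iteration parallelized work-preservingly. Your accounting is in fact slightly sharper than the paper's (you give the iteration count as $O(\log^2 n)$ where the paper loosely writes $O(n)$, and you spell out the parallel construction of $G'$, which the paper leaves implicit), but the argument is the same.
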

\begin{proof}
First, we use the simulation in theorem \ref{theorem: WL final} to convert performing $k$-dimensional WL refinement on $G$ to performing color refinement on $G'$. From construction, $G'$ has $O(nkn^k)$ vertices, therefore color refinement on $G'$ runs at most $O(nkn^k)$ iterations. 

Next, we use 2-dimensional WL refinement on $G'$ to simulate color refinement on $G'$ as in Corollary \ref{cor: CR WL}. The result of color refinement can be simulated in $\log{(2nkn^k)}$, which is $ O(n)$ iterations of 2-dimensional WL refinement for $k = O(\log{}n)$. 

Finally, we can execute this 2-dimensional WL refinement in parallel polynomial time. We will prove that in each iteration of 2-dimensional WL refinement, the computation can be done efficiently in parallel. Recall that in the 2-dimensional WL refinement, the new color of a pair of vertices $u,v$ is determined by the number of elements $z \in \Omega$ such that $c(x,z) = j$ and $c(y,z) = k$ for all $j,k$. For each pair of vertices $(u,v)$, looking at the colors $c(u,j)$ and $c(j,v)$ for all vertices $j$ in $G'$ is entirely parallelizable and can be done in constant time. Comparing those sets of color between two pair of vertices takes constant time, and performing all the comparisons among all sets of two pairs of vertices can be done in parallel. Therefore, each iteration of 2-dimensional WL refinement of $G'$ can be done in parallel polynomial time, thus whether the time complexity of a parallel execution of 2-dimensional WL refinement is polynomial or not depends on the total number of iterations needed, which is polynomial in $n$. Therefore, the time complexity of simulating color refinement on $G'$ using 2-dimensional WL refinement in parallel is polynomial, with a quasipolynomial total amount of work (because our parallelization of 2-dimensional WL refinement iterations is work-preserving, and there is a quasipolynomial number of vertices in $G'$).

In conclusion, we can perform $k$-dimensional WL refinement where $k = O(\log{}n)$ in parallel polynomial time with a quasipolynomial amount of total work, and therefore a quasipolynomial number of processors needed.
\end{proof}

\subsection{The combined algorithm}

We have examined all the bottlenecks in Babai's algorithm and parallelized them. In this subsection, we combine the parallelization schemes from the previous parts to create the master algorithm. 

\begin{theorem}
The graph isomorphism problem can be solved by a parallel algorithm in polynomial time using a quasipolynomial number of processors.
\end{theorem}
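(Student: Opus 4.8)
The plan is to assemble the parallelization schemes of the preceding subsections into a single parallel algorithm that mirrors the recursion structure of Babai's algorithm. First I would recall that Babai's quasipolynomial algorithm, when unfolded, is a backtracking/recursion tree in which every recursive step splits the current String Isomorphism instance into at most a quasipolynomial number of smaller instances, and the depth of the recursion is at most polylogarithmic in $n$; hence the whole tree has a quasipolynomial number of nodes. The parallel algorithm allocates one group of processors per node of this tree. By \cref{lemma: 1}, the branching at individualization and reduction steps converts directly into a multiplicative factor on the processor count rather than on time, so after this allocation we still use only a quasipolynomial number of processors in total, and the nodes on a single root-to-leaf path are the only ones that must be evaluated sequentially relative to one another.

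Next I would bound the parallel time and work spent inside a single tree node. Each node performs polynomial-time bookkeeping together with the large-computation bottlenecks catalogued in Section~3: the large repetitive tasks (the $\binom{|\Gamma|}{k}$ Local Certificate calls and the search over $k$-tuples), handled in parallel polynomial time and work-preservingly by \cref{lemma: 2}; the aggregation-of-results tasks that combine the generating sets of returned groups and cosets, kept polynomial by applying the generating-set refinement subroutine (producing an $O(n\log n)$-size generating set in parallel $O(n\log n)$ time) after every merge, so that generator sets never accumulate to superpolynomial size as results propagate upward; and the $k$-dimensional WL refinement inside Split-or-Johnson, which by \cref{theorem: k dim} runs in parallel polynomial time using a quasipolynomial number of processors. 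Thus every node finishes in parallel polynomial time using at most quasipolynomially many processors and doing at most a quasipolynomial amount of work.

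Finally I would combine the two bounds. A child node depends only on its parent for its input instance and feeds back only to its parent for aggregation, so the critical path of the whole computation is a single root-to-leaf branch followed by the aggregation path back up, of length at most polylogarithmic; multiplying this polylogarithmic depth by the polynomial per-node parallel time gives polynomial total time. For processors, the quasipolynomial number of tree nodes times the quasipolynomial number of processors per node is still quasipolynomial, since the product of two quasipolynomial functions is quasipolynomial. Because Graph Isomorphism reduces to String Isomorphism with only a polynomial blow-up (Section~2), the same bounds hold for GI, which is the claim.

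I expect the main obstacle to be the careful accounting at the aggregation steps: one must check that inserting the generating-set refinement after each merge is sound (it preserves the generated group) and that, with refinement in place, no quantity passed upward through the polylogarithmically many recursion levels grows superpolynomially --- in particular that the coset representatives produced by Luks' reduction and by Aggregate Certificates, once refined, still fit the polynomial per-node budget. A secondary point is to verify that the depth bound is genuinely a bound on the recursion tree of the parallel unfolding, i.e.\ that once the $k$-dimensional WL refinement (the one procedure that did hide a long chain of dependent iterations) has been replaced by its parallel version from \cref{theorem: k dim}, none of the remaining four procedures conceals a super-polylogarithmic chain of sequentially dependent steps.
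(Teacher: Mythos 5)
Your proposal is correct and follows essentially the same route as the paper's own proof: it decomposes the algorithm into the multiplicative bottlenecks (handled by \cref{lemma: 1}), the large repetitive tasks (\cref{lemma: 2}), the aggregation steps (handled by inserting the generating-set refinement after every merge), and the $k$-dimensional WL refinement (\cref{theorem: k dim}), then multiplies polylogarithmic recursion depth by polynomial per-node time and observes that the product of quasipolynomial processor counts is quasipolynomial. Your explicit critical-path accounting and the caveats you flag about generator-set growth across recursion levels are, if anything, slightly more careful than the paper's own argument.
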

\begin{proof}
From lemma \ref{lemma: 1} and \ref{lemma: 2}, the multiplicative bottleneck and the large repetitive tasks bottleneck of Babai's algorithm can be done in parallel polynomial time in a work-preserving manner and such that any multiplicative cost to the runtime of the program becomes multiplicative cost to the number of processors.

For group operations on large groups, which are formed by aggregating results tasks, we apply our RefineGeneratingSet subroutine developed in subsection \ref{sec: refine generating} after every aggregation of results operation. By doing this, we guarantee that all the groups in the algorithm have a polynomial size generating set, and therefore the runtime of group operations on them is not superpolynomial. The parallel subroutine runs in  polynomial time and introduces an additional worst case quasipolynomial $n^{\log{}n}$ cost to the total amount of work.

For $k$-dimensional WL refinement, we follow the procedure described in theorem \ref{theorem: WL final} to parallelize the procedure to parallel polynomial time. The parallelization introduces a $n^{O(\log{}n)}$ multiplicative cost to the total amount of work. However, we note that this is an one-time cost, and thus the total amount of work done by the algorithm after parallelization is still quasipolynomial.

Combining the above arguments, we arrive at our parallel version of Babai's algorithm. The algorithm runs in polynomial time in each iteration, any multiplicative cost to the runtime of the algorithm after each iteration is converted into multiplicative cost to the number of parallel processors, and there are a polynomial (in fact logarithmic) number of iterations in total. Thus, the whole parallel algorithm runs in polynomial time. The amount of total work is increased by at most a quasipolynomial multiplicative cost, and thus is still quasipolynomial. Therefore, the number of processors needed is quasipolynomial. 

\end{proof}

\section{Conclusion and remarks}
In our work, we have proven that the GI problem can be solved using a parallel algorithm that runs in polynomial time using a quasipolynomial number of processors. Therefore, problems in the GI complexity class, which are problems that can be polynomially reduced to Graph Isomorphism, can be solved in parallel polynomial time using a quasipolynomial number of processors. 

This result implies that in theory solving the worst cases of the GI problem is tractable in a parallel computer. It also implies that Babai's algorithm is highly parallel and can be sped up superpolynomially in a parallel computer. We note that despite being quite complicated, Babai's algorithm shares multiple similarities with other GI algorithms. Many of those similarities are bottlenecks of parallelization, such as the backtracking search tree structure, color refinement and $k$-dimensional operations, individualization and so on. Therefore, the parallelization techniques used in this paper can be generalized to other GI algorithms as well, indicating that algorithms for GI in general is highly parallel.

In addition, our finding that color refinement can be calculated in logarithmic time in parallel can potentially have a notable impact in practical GI solvers. On the one hand, almost all practical GI solvers use color refinement as the main tool to solve the problem. On the other hand, given the increase in demand of big data processing and parallel algorithms, there have been a surprisingly few practical parallel algorithms for the GI problem - it is hard to find one such program beside \cite{Son15}. Therefore, our techniques used for parallelizing color refinement can potentially be applied to create new practical parallel GI algorithms, or to improve existing ones.

%%
%% Bibliography
%%

%% Please use bibtex, 

\bibliography{main}

\appendix

\end{document}